\newcommand{\sbt}{\,\begin{picture}(-1,1)(-1,-3)\circle*{3}\end{picture}\ }
\begin{document}
%
\title{A Real-time Group Auction System for Efficient Allocation of Cloud Internet Applications}
%
%
%
%

\author{Chonho Lee, \ Ping Wang, \ Dusit Niyato
\thanks{C. Lee, P. Wang, and D. Niyato are with the School of Computer Engineering at Nanyang Technological University in Singapore. Email: \{leechonho, wangping, and dniyato\} @ ntu.edu.sg}}

%
%

\markboth{IEEE Transactions on Services Computing, ~Vol.~xx, No.~xx}%
{Shell \MakeLowercase{\textit{et al.}}: Bare Demo of IEEEtran.cls for Computer Society Journals}

\IEEEcompsoctitleabstractindextext{%
\begin{abstract}
Increasing number of the cloud-based Internet applications demands for efficient resource and cost management.
This paper proposes a real-time group auction system for the cloud instance market. The system is designed based on a combinatorial double auction, and its applicability and effectiveness are evaluated in terms of resource efficiency and monetary benefits to auction participants (e.g., cloud users and providers).
The proposed auction system assists them to decide when and how providers allocate their resources to which users.
Furthermore, we propose a distributed algorithm using a group formation game that determines which users and providers will trade resources by their cooperative decisions.
To find how to allocate the resources, the utility optimization problem is formulated as a binary integer programming problem,
and the nearly optimal solution is obtained by a heuristic algorithm with quadratic time complexity.
In comparison studies, the proposed real-time group auction system with cooperation outperforms an individual auction in terms of the resource efficiency (e.g., the request acceptance rate for users and resource utilization for providers) and monetary benefits (e.g., average payments for users and total profits for providers).
\end{abstract}

\begin{keywords}
Cloud Computing, Group Auction, Resource Provisioning, Resource Pricing
\end{keywords}}

\maketitle

\IEEEdisplaynotcompsoctitleabstractindextext

%
\IEEEpeerreviewmaketitle

\section{Introduction}

\newcommand{\nbiafWin}{53}
\newcommand{\nbiafLose}{21}

\newcommand{\nbgafWin}{59}
\newcommand{\nbgafLose}{15}

\newcommand{\bgaWin}{56}
\newcommand{\bgaLose}{18}

\newcommand{\gabctWin}{58}
\newcommand{\gabctLose}{16}

\newcommand{\gabctgfWin}{60}
\newcommand{\gabctgfLose}{14}

\newcommand{\bgacopWin}{67}
\newcommand{\bgacopLose}{7}

\newcommand{\nbiafProvone}{\$236.3}
\newcommand{\nbiafProvtwo}{\$145.8}
\newcommand{\nbiafProvSum}{\$382.1}

\newcommand{\nbgafProvone}{\$271.1}
\newcommand{\nbgafProvtwo}{\$192.4}
\newcommand{\nbgafSum}{\$463.5}

\newcommand{\bgaProvone}{\$195.5}
\newcommand{\bgaProvtwo}{\$281.3}
\newcommand{\bgaSum}{\$476.8}

\newcommand{\gaProvone}{\$186.2}
\newcommand{\gaProvtwo}{\$208.4}
\newcommand{\gaProvSum}{\$394.6}

\newcommand{\gagfProvone}{\$191.7}
\newcommand{\gagfProvtwo}{\$212.1}
\newcommand{\gagfProvSum}{\$403.8}

\newcommand{\bgacopProvone}{\$218.7}
\newcommand{\bgacopProvtwo}{\$345.8}

\newcommand{\iaAvgCost}{\$7.21}
\newcommand{\gaAvgCost}{\$6.84}
\newcommand{\gagfAvgCost}{\$6.73}

\newcommand{\nbiafResutil}{70.5\%}
\newcommand{\nbgafResutil}{77.3\%}
\newcommand{\bgaResutil}{79.5\%}
\newcommand{\gaResutil}{82.8\%}
\newcommand{\gagfResutil}{88.7\%}
\newcommand{\bgacopResutil}{94.8\%}

\newcommand{\iaDone}{2224} \newcommand{\iaDtwo}{1831} \newcommand{\iaDthree}{2035}
\newcommand{\gaDone}{2655} \newcommand{\gaDtwo}{2102} \newcommand{\gaDthree}{2398}
\newcommand{\gagfDone}{2942} \newcommand{\gagfDtwo}{2195} \newcommand{\gagfDthree}{2531}

\newcommand{\nbiafMargin}{\$11.2}
\newcommand{\nbgafMargin}{\$16.5}
\newcommand{\bgaMargin}{\$14.5 }
\newcommand{\bgacoMargin}{\$18.4 }
\newcommand{\bgacopMargin}{\$25.1 }

\newcommand{\upReqAR}{13\% }		
\newcommand{\upResUtil}{26\% }	
\newcommand{\upPay}{7\% }				
\newcommand{\upProfit}{6\% }		

\newcommand{\CInbiafWin}{53.57 $\pm$ 4.24}
\newcommand{\CIgabctWin}{59.85 $\pm$ 7.78}
\newcommand{\CIgabctgfWin}{62.88 $\pm$ 4.95}

\newcommand{\CInbiafResutil}{69.26\% $\pm$ 2.83\%}
\newcommand{\CIgaResutil}{81.01\% $\pm$ 2.83\%}
\newcommand{\CIgagfResutil}{85.24\% $\pm$ 4.24\%}

\newcommand{\CInbiafProvone}{\$242.4 $\pm$ \$32.1}
\newcommand{\CInbiafProvtwo}{\$175.1 $\pm$ \$42.3}
\newcommand{\CInbiafProvSum}{\$394.1 $\pm$ \$20.2}
\newcommand{\CIgaProvone}{\$201.8 $\pm$ \$28.8}
\newcommand{\CIgaProvtwo}{\$223.4 $\pm$ \$47.6}
\newcommand{\CIgaProvSum}{\$429.3 $\pm$ \$71.3}
\newcommand{\CIgagfProvone}{\$202.5 $\pm$ \$22.1}
\newcommand{\CIgagfProvtwo}{\$238.7 $\pm$ \$52.1}
\newcommand{\CIgagfProvSum}{\$442.4 $\pm$ \$67.5}

Cloud computing is a large-scale distributed computing leveraging Internet-accessible data centers
that provide computing resources (e.g., CPU, memory, storage, and network bandwidth) as cloud.
Modern Internet applications are designed using the virtualization technology in the cloud computing environment.
Such cloud-based Internet applications are deployed on virtual machines (VMs), also called {\it instances}~\cite{AmazonEC2} whose resource requirements are pre-configured.
Cloud users (e.g., application developers) request a certain number of instances to run their applications, which are hosted by cloud providers (e.g., owners of physical machines).
According to the number of instances that cloud users use, cloud providers charge the fee for what the cloud users used.

This paper studies the issues of cost efficiency for cloud users and resource efficiency for cloud providers.
Users are required to minimize the cost to deploy and run their applications on clouds and to maximize the reliability of the applications by improving the resource availability.
On the other hand, providers are required to maximize their revenue by attracting users and improving the resource utilization.
However, it is hard for users to determine which providers that they should trade with, and vice versa, to meet their objectives under dynamic and unpredictable resource demands, supplies, prices, and budgets.

The goal of this paper is to design a real-time auction system for the cloud instance market where multiple cloud users and providers respectively buy and sell instances to run and host cloud-based Internet applications.
The proposed auction system assists the auction participants (i.e., both cloud users and providers) to decide how providers allocate their resources to which users in order to improve both cost and resource efficiency.

To achieve the goal, we introduce a group auction and the participants' cooperation.
A group auction is a type of auctions and a market method that benefits both buyers and sellers. Specifically, sellers offer buyers price discounts according to the number of requested items (i.e., the more the requested items, the lower the price).
One example scenario of such auction is the online group-buying shopping sites \cite{eWinWin}, \cite{groupgain}, and \cite{groupon}, which have recently become popular.
However, the group auction has not been extensively analyzed in cloud computing although it has been well studied in economics.
One of the objectives of this paper is to investigate the applicability and effectiveness of a group auction to the cloud instance market.

In the proposed group auction, multiple cloud users are allowed to form a group, called a {\it user coalition} in order to buy their instances at a discounted price.
Multiple cloud providers are also allowed to form a group, called a {\it provider coalition} and share their resources to effectively use the residual resources, which results in the improvement of resource utilization.

In such a cooperative system, users expect to buy instances at a low price, and the providers expect to attract more users to improve the resource utilization and thus to increase their revenues.
However, the cooperation may cause a delay of the instance allocation (e.g., by waiting for more users to join an auction) and a cost for the instance migration~\cite{Voorsluys09} (e.g., a cost incurred by the performance degradation of applications when instances are migrated to different servers in different providers).
There is a need for the analysis of participants' cooperation decisions, i.e., when to or not to cooperate.
It is challenging to find stable set of such decisions, which satisfies all of rational participants trying to maximize only their own benefits.
The contributions of this paper can be summarized as follows:
\begin{itemize}
\item
We propose a real-time group auction system in the cloud instance market, which is designed based on a combinatorial double auction,
and verify its applicability and effectiveness
in terms of resource efficiency and benefits to auction participants.
\item
We propose a distributed algorithm using a group formation game that determines which users and providers will trade resources by their cooperative decisions.
\item
The utility optimization problem is formulated as a binary integer programming problem for the resource allocation,
and the nearly optimal solution is obtained by a heuristic algorithm with quadratic time complexity.
\end{itemize}

The proposed real-time group auction system gives auction participants the resource efficiency (e.g., \upReqAR higher request acceptance rate for users and \upResUtil higher resource utilization for providers) and monetary benefits (e.g., \upPay lower average payments for users and \upProfit higher total profits for providers) compared to a conventional individual auction with the best-fit allocation.
The proposed group formation and instance allocation algorithms are analyzed in their complexity, stability, and optimality. The results show that the algorithms with quadratic time complexity always obtain a nearly optimal group structure regardless of configurations and initial conditions.
The proposed system acts as a glue that binds cloud users and providers, and it can be used by providers or third parties (i.e., brokers) for efficient cloud service delivery and resource management in the practical system.

The remainder of this paper is organized as follows:
\begin{itemize}
\item
Section~\ref{sec:relatedwork} introduces existing related work and summarizes their similarities and differences from our work.
\item
Section~\ref{sec:systemmodel} explains how the cloud instance market works in a group auction manner, and then overviews a system model for the proposed group auction. Some assumptions that we considered in the system are stated in this section.
\item
Section~\ref{sec:groupauctionsystem} formulates the instance allocation as the social welfare optimization and presents in detail how to achieve the nearly optimal solution by the proposed group formation algorithm. 
\item
Section~\ref{sec:analysis} gives a theoretical analysis of the proposed group formation algorithm in terms of the time complexity and stability.
\item
Section~\ref{sec:evaluation} discusses the convergence and optimality of the proposed algorithm on numerical results and compare the results with different schemes through the simulation, which is followed by a conclusion.
\end{itemize}

\section{Related Work}
\label{sec:relatedwork}

This section introduces existing related work and describes their similarities and differences from our work, which are summarized in Table~\ref{tab:relatedwork} in terms of resource allocation methods, auction types, and cooperation methods.


\vspace{+6pt}
\noindent {\bf Resource Allocation}

The importance of resource provisioning has been well discussed in various fields such as wireless networks, energy industries, and advertisements, which have proposed the allocation and pricing model of resources (e.g., wireless channels~\cite{Berry10,Mutlu09}, electricity~\cite{Markus04,Kian05}, and advertisements~\cite{Bhalgat11,Mahdian07}) to improve the resource utilization and efficiency.
We focus on instances in clouds and consider an instance market where computing resources (e.g., bandwidth, CPU time and memory space) are traded as instances.

For the resource allocation, there exist several techniques such as
game theory finding an equilibrium solution among players~\cite{Wei10,Ardagna11};
stochastic programming considering uncertainty~\cite{Chaisiri11};
and bio-inspired mechanisms (e.g., genetic algorithm that seeks a Pareto solution of a multi-objective problem~\cite{Singh06} and Ant colony that provides a heuristic solution of a complex problem~\cite{Chimakurthi11}).
We apply the auction theory to design the cloud instance market and formulate its instance allocation.

\vspace{+6pt}
\noindent {\bf Auction-based Mechanism}

Auction-based mechanisms have been proposed in various fields such as wireless networks and cloud computing in order to investigate how participants (or nodes) behave in a competition for resources; and different classes of auctions such as sequential second price auction~\cite{JBae08}, Vickrey auction~\cite{Stanojev10}, double auction~\cite{TanGurd07}, and combinatorial auction~\cite{Fujiwara10} have been considered in the design of the mechanisms.
However, none of them considers a group auction and the participants' cooperation.
We investigate a combinatorial double auction executed in a group-buying manner and analyze the optimal allocation by observing participants' cooperative decisions in a group (or coalition) formation game.


\vspace{+6pt}
\noindent {\bf Cooperation}

Cooperation is one of the important concepts to improve the resource efficiency.
The distributed resource management schemes in computational grid were developed
with negotiation algorithm~\cite{Linli05} and coalition formation algorithm \cite{GongYong03}.
These schemes allow rational agents managing server farms to form the cooperation to share available resources.
The cooperative task scheduling has been proposed in~\cite{Pascual09}.
It was shown that it is always possible to obtain collaborative solution of the self-interested agents which can improve the global system performance.
We adopt the concept of cooperation among users and providers under the consideration of both the cost and resource efficiency.
None of the works considered a dynamic instance allocation in a group auction from the coalitional game perspective.

\begin{table}[t]
\small
  \caption{Summary of related work ({\small A symbol "-" indicates omission for out of the paper scope.})}
	\label{tab:relatedwork}
  \centering
	\renewcommand{\arraystretch}{1.2}
	\begin{tabular}{|l|c|c|c|c|c|c|}
	\cline{1-3}\cline{5-7}
	\multirow{2}{*}{{\bf Resources}} & \multicolumn{2}{c|}{{\bf Allocation methods}} &&
	{\bf Auction types}	& Single-sided & Double-sided	
	\\	
	\cline{2-3}\cline{5-7}
	 			& Auction-based & Others && Individual & [24] &	[19, 20, 25]
	\\	
	\cline{1-3}\cline{5-7}
	Electricity & [8, 9] & - &&  Group & [2, 3, 4] & Our work
	\\	
	\cline{1-3}\cline{5-7}
	Wireless spectrum & [6, 17, 18] & - &\multicolumn{1}{c}{}&
	\multicolumn{3}{c}{}
	\\	
	\cline{1-3}\cline{5-7}
	\multirow{2}{*}{Cloud instances} & Our work & [12, 14, 15 &&
	\multirow{2}{*}{{\bf Cooperation}}  & Centralized & Self-interested
  \\	
					 & [19, 20, 24, 25] & 16, 21, 22] &&
	         & decision & decision 
  \\	
	\cline{1-3}\cline{5-7}
	Others	 & [2, 3, 4]       & - &&
	Buyers &  [22] & [21]
  \\	
	\cline{1-3}\cline{5-7}
	\multicolumn{3}{c}{} &&
  Buyers and/or Sellers   & - & Our work 
  \\ 	
	\cline{5-7}
	\end{tabular}
\end{table}

\vspace{+6pt}
\noindent {\bf Cloud Instance Market}

The proposed work differs from the major cloud hosting services such as AmazonEC2's Spot Instance~\cite{AmazonEC2-SpotInstance} and SpotCloud~\cite{SpotCloud}, which deal with the instance market.
\cite{AmazonEC2-SpotInstance} tries to sell the residual resources to cloud users to achieve high resource utilization. Users join an auction to reserve instances and pay for the resources at a dynamically changing spot price offered based on the supply-demand conditions.
Our work is similar to this in terms of the trading price changed based on supply-demand conditions.
But, \cite{AmazonEC2-SpotInstance} is considered to be a one-sided auction that users' requests are individually processed in a first-come-first manner. Our work is a double auction executed in a group auction manner.

Similar to the proposed system, \cite{SpotCloud} works as a double auction, and resource prices change depending on providers' valuations to the resources.
Users submit their instance requirements, and providers submit profiles of hardware to be supplied.
\cite{SpotCloud} lists the profiles best-matched for the user requirements, and the users will choose what they want from the list.
For providers, \cite{SpotCloud} automatically creates instances on providers' machines requested by users.
Our proposed approach can be applied to such SpotCould market to automate the users' provider selection step; it tries to find the best matching of users and providers cooperating each other to gain benefits in terms of monetary cost and resource efficiency.

In the evaluation section, we verify the impact of group auction and cooperation by comparing the proposed scheme with the two schemes. However, for the fair comparison, we extend them as follows, and we respectively call IA-FCFS and GA-BCT (described in Table 4). 
IA-FCFS mimics \cite{AmazonEC2-SpotInstance} but considering multiple providers, and we try to observe the impact of group auction.
GA-BCT mimics \cite{SpotCloud} but following our allocation algorithm, and we try to observe the impact of cooperation among users and providers.


\section{The System Model and Assumptions}
\label{sec:systemmodel}

\subsection{Cloud Instance Market}
\label{subsec:cloudinstancemarket}

We consider the system model for the cloud instance market where multiple cloud users and providers respectively buy and sell instances in a group auction manner. To buy/sell instances, the users/providers submit bids/offers to a central controller which supports the instance allocation and pricing determination.
This paper considers a discrete-time (e.g., 15 minutes or half an hour) system so a concept of {\it time slot} is involved.
We say time slot $t$ for a period between time $t-1$ and time $t$.
For example, saying that the system collects 3 bids at time slot 5 means that 3 bids are submitted during a period between time instants 4 and~5.

The system deals with $K$ different types of instances.
The instance in different types will require the different size of computing resources such as memory, storage, CPU capacity (e.g., EC2 Compute Unit\footnote{One EC2 Compute Unit provides the equivalent CPU capacity of a 1.0-1.2 GHz 2007 Opteron or 2007 Xeon processor.}), and network bandwidth associated with I/O performance (e.g., throughput and latency).
For example, Amazon EC2~\cite{AmazonEC2} offers various types of Standard Instance 
shown in Table~\ref{tab:instancetype}.
Different providers provide various types of instances with different combinations of computing resources.

\begin{table}[h]
  \caption{Types of Standard Instance Family in Amazon EC2}
	\label{tab:instancetype}
  \centering
	\begin{tabular}{|c||c|c|c|c|}
	\hline
	\bf Instance type & \bf Small & \bf Medium & \bf Large & \bf Ex-large  \\	
	\hline
	Memory  & 1.7 GB & 3.75 GB & 7.5 GB & 15 GB   \\	
	\hline
	Storage & 160 GB & 410 GB & 850 GB &  1690 GB \\	
	\hline
	EC2 Compute Unit & 1 & 2 & 4 & 8 \\	
	\hline
	I/O performance  & Moderate & Moderate & High &  High \\	
	\hline
	\end{tabular}  
\end{table}
\noindent

Modern Internet applications are designed using multiple servers at multiple tiers, each of which provides a certain functionality \cite{Urgaonkar07}.
For example, a front-end Web server is responsible for HTTP processing, a middle-tier Java application server implements the application logic, and a back-end database server stores user information.
The different tiers of an application are assumed to be distributed across different servers.
Depending on the desired capacity, a tier may also be clustered by multiple servers of the same type, which are connected in parallel.

The servers are requested/provided by users/providers as {\it instances} \cite{zaman2010,Daniel2011} mentioned above.
A cloud user requests a bundle of instances that represents a multitier application and buys the instances in a taking-all-or-none manner.
For example, a user requests four Small instances (for Web servers), two Large instances (for application servers), and one Ex-large instance (for a database).
If one of the requested instances cannot be provided, then other requested instances become useless to the user.
We also assume that the bundle of instances will be hosted by one provider to take advantage (e.g., high throughput and short latency) of the proximity among servers for an application.

\subsection{Participants of the Cloud Instance Market}
\label{subsec:participants}

\subsubsection{Cloud Users}
A cloud user $i$ submits a bid defined by $b_i = (\vec{d_i}, \vec{\ell_i}, v_i )$ where $\vec{d_i}=(d_i^1, d_i^2, \ldots, d_i^K)^T$ is a {\it demand}, and $d_i^k$ indicates the number of instances of type $k$.
$\vec{\ell_i}=(\ell_i, t_i^{s},t_i^{e})^T$ is a {\it demand period} where $\ell_i$ indicates a {\it length} of time that the user $i$ wants to reserve a bundle of the instances between {\it starting time} $t_i^{s}$ and {\it ending time} $t_i^{e}$.
For example, $\vec{\ell_i}=(3,2,8)$ indicates that user $i$ requests instances for 3 time slots (that do not have to be consecutive) between time 2 and time 8, each of the time slots reserves $\vec{d_i}$ instances.
We define $t_i^d \equiv t_i^e-\ell_i$ and call it the {\it deadline} by which the instances should be assigned to satisfy the demand. 
$v_i$ is user $i$'s {\it valuation} for the demand as a bidding price, which indicates the maximum price that is acceptable for the user to buy the requesting instances.

\subsubsection{Cloud Providers}
A cloud provider $j$ submits an offer defined by $o_j = (\vec{s_j}, \vec{w_j}, Q_j)$ where $\vec{s_j}=(s_j^1, s_j^2, \ldots, s_j^K)^T$ is a {\it supply} and $s_j^k$ indicates the number of instances of type $k$ that provider $j$ can provide per time slot.
$\vec{w_j}=(t_j^{s},t_j^{e})^T$ is a {\it supply period}\footnote{We formalize the supply period as variables whose values are different for different providers. The providers might be personal providers or small companies that supply their resources for limited time periods~\cite{SpotCloud} while large providers \cite{AmazonEC2} do not limit the supply period in general.}.
$w_j \equiv t_j^{e}-t_j^{s}$ denotes a {\it length} of time that a provider is able to provide the instances between $t_j^{s}$ and $t_j^{e}$.
$Q_j$ is provider $j$'s valuation\footnote{In practice, the provider's valuation to an instance is determined based on several factors such as the value of applications to users, the running cost of servers, and other competing providers. It is beyond our work in this paper to understand how providers set their own valuations. For simplicity, we assume that the valuations are pre-determined.} for the supply as an offering {\it price curve}, which indicates the minimum of a unit price of the offered instances that the provider wishes to sell.
It is defined by a set of vectors over different number of instances and instance types as follows:
$Q_j = (\vec{q}_j^1, \vec q_j^2, \ldots, \vec q_j^K)$
where $\vec q_j^k=(q_j^k[1],\ldots,q_j^k[n_j^k])^T$, and $q_j^k[n]$ indicates the offering price when $n$ instances of type $k$ are sold by provider $j$, and $q_j^k[0] = +\infty$.
$\vec{q}_j^k$ holds a condition $q_j^k[1] \geq \cdots \geq q_j^k[n_j^k]$.
Let $Q_j[\vec{n_j}]$ denote an extraction of the price curve when $\vec{n_j}=(n_j^1,n_j^2,\ldots,n_j^K)$ instances are sold, which is represented as a row vector
$Q_j[\vec{n_j}] = (q_j^1[n_j^1], q_j^2[n_j^2],\ldots, q_j^K[n_j^K])$.

\subsection{Overview of the Proposed Group Auction System}
\label{subsec:overview}

\begin{figure}[t]
  \centering
	\resizebox{0.95\linewidth}{!}{\includegraphics{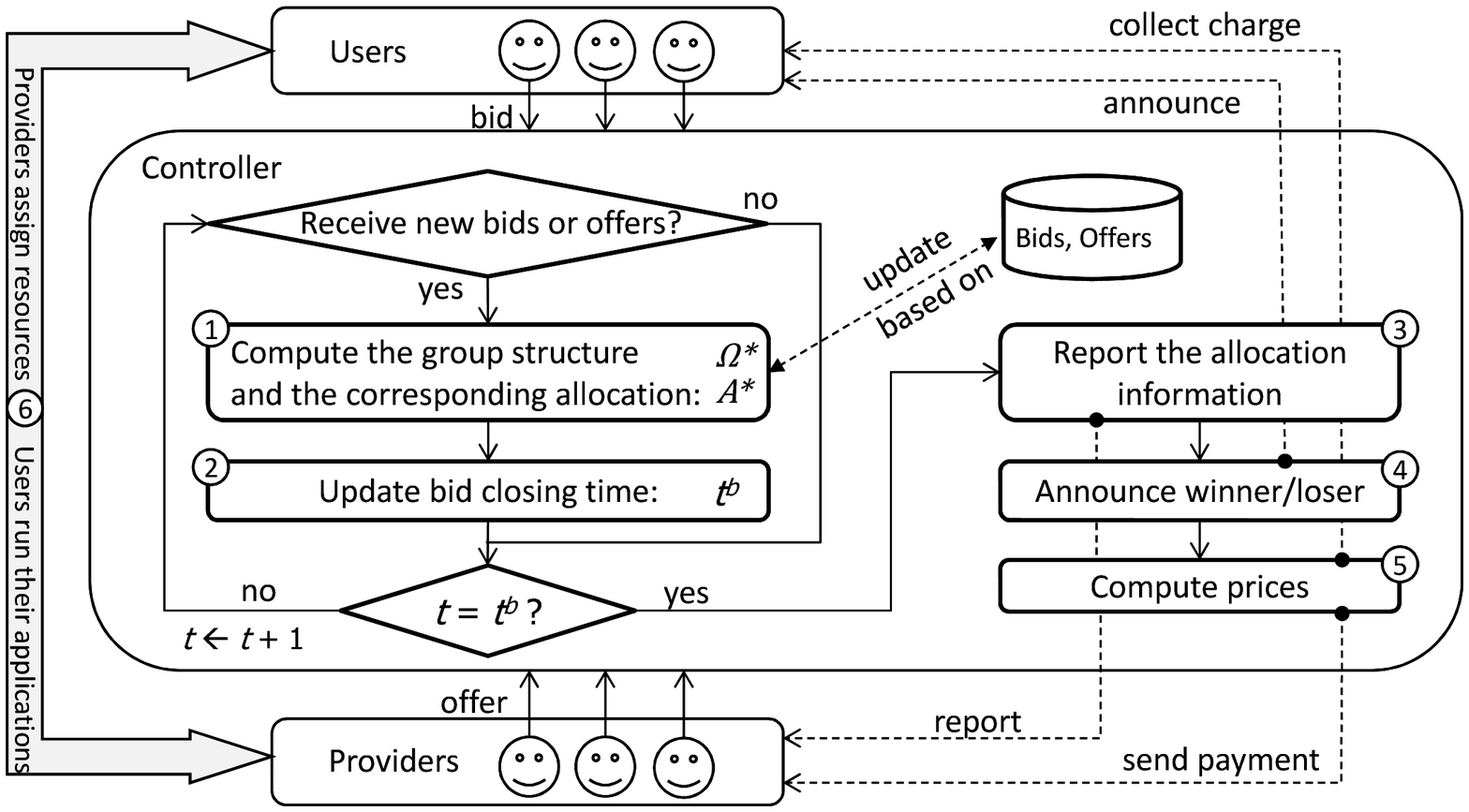}}
  \caption{A flowchart of the proposed group auction system.}
  \label{fig:flowchart}
\end{figure}

The central controller maintains the bids and offers collected from cloud users and providers respectively, and computes {\it when} and {\it how} to allocate resources to {\it which} users.
Figure~\ref{fig:flowchart} is a flowchart of the proposed group auction system and shows how the central controller works.
The system undertakes three main tasks such as 
the {\it allocation computation} (Labels \raisebox{.5pt}{\textcircled{\raisebox{-.9pt}{1}}} and \raisebox{.5pt}{\textcircled{\raisebox{-.9pt}{2}}}),
the {\it result reporting} (Labels \raisebox{.5pt}{\textcircled{\raisebox{-.9pt}{3}}} and \raisebox{.5pt}{\textcircled{\raisebox{-.9pt}{4}}}), and
the {\it payment management} (Label \raisebox{.5pt}{\textcircled{\raisebox{-.9pt}{5}}}).

Whenever the central controller receives new bids/offers, it updates their information on database and computes the best instance allocation (Label \raisebox{.5pt}{\textcircled{\raisebox{-.9pt}{1}}}) adopting a concept of cooperation among users and providers (described in the following section).
If the controller closes the bid/offer submission (Label \raisebox{.5pt}{\textcircled{\raisebox{-.9pt}{2}}}), it reports the recently computed instance allocation to providers (Label \raisebox{.5pt}{\textcircled{\raisebox{-.9pt}{3}}}) and announces to users who are the winners/losers of the auction (Label \raisebox{.5pt}{\textcircled{\raisebox{-.9pt}{4}}}).
Once the charging and payment are complete (Label \raisebox{.5pt}{\textcircled{\raisebox{-.9pt}{5}}}), users and providers establish the connection and start to run/host applications (Label \raisebox{.5pt}{\textcircled{\raisebox{-.9pt}{6}}}).

The instance allocation is formulated as a social welfare (or utility) optimization problem,
and the solution is obtained by the proposed group formation algorithm.
The trading prices are determined based on the social welfare distribution scheme.
The details are presented in Section \ref{sec:groupauctionsystem}.
The formulations of both allocation and pricing are considered to satisfy three auction properties as follows:

$\sbt$ An allocation is {\it allocatively efficient} if there are no participants who gain utility from decreasing others' utility.

$\sbt$ An allocation is {\it individually rational} if participants are never charged more than their valuations as a result of the allocation.

$\sbt$ An allocation is {\it budget-balanced} if the total profits of providers is the same as the total payments by users.

\subsubsection{Cooperations among Users and among Providers}
\label{subsubsec:cooperation}

In the proposed system, users and providers can form coalitions to gain benefits in terms of resource price and resource utilization efficiency.
Figure~\ref{fig:example_cooperation} shows two examples explaining two types of cooperation.
One, illustrated in Figure 2(a), is the case that multiple users request resources of the same provider to lower the instance price. 
Another, illustrated in Figure 2(b), is the case that multiple providers supply resources to meet the same user's request.
An edge-dotted line indicates the allocation of a user and a provider.
A shaded box indicates the allocated instances. The height of the box represents the number of instances.

\begin{figure}[t]
  \centering
	\resizebox{0.9\linewidth}{!}{\includegraphics{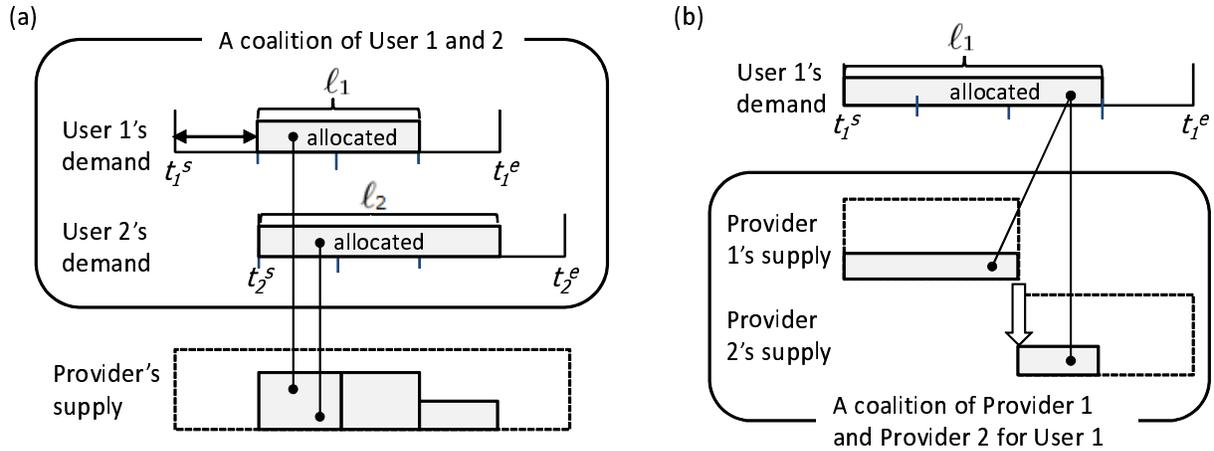}}
  \caption{An example allocation when (a) users cooperate and (b) providers cooperate. The shaded box indicates the allocated instances. The height of the box represents the number of instances.}
  \label{fig:example_cooperation}
\end{figure}

For the first type, multiple cloud users are allowed to form a group, i.e., we call a {\it user coalition}, and request instances of the same provider in order to buy the instances at a discounted price.
For example, when two users request 10 instances ($d_1=4$ and $d_2=6$) as a coalition, a provider's offering price changes from $q[4]$ (or $q[6]$) to $q[10]$ where $q[4]=q[6] > q[10]$.
However, a cooperation may incur a delay of the instance allocation because different users have different demand periods.
Figure~\ref{fig:example_cooperation}(a) explains the delay (denoted by an arrow) of User 1 forming a coalition with User 2.
Instances for User 1 could be allocated from time~$t_1^s$.

For the second type, multiple cloud providers are also allowed to form a group, i.e., we call a {\it provider coalition}.
The providers in the same coalition can share users' demands to meet their requests.
For doing this, users' instances can be migrated from one provider to another as indicated by a down arrow in Figure~\ref{fig:example_cooperation}(b).
This cooperation enables the effective use of the residual resources and improves their resource utilization.
However, the cooperation may incur a cost for the instance migration.
For example, during migration, the degradation of application performance (e.g., throughput and response time) can be considered as the cost.

\subsubsection{Allocation and Bid Closing Time}

Denote $B_t$ as a set of bids and $O_t$ as a set of offers that the controller retains at time $t$, respectively.
A {\it demand period} for demands in $B_t$ is specified by $\underline{t} = \min_{} \{t_i^s | t_i^s \in b_i, \forall b_i \in B_t\}$ and $\overline{t}  = \max_{} \{t_i^e | t_i^e \in b_i, \forall b_i \in B_t\}$.
The {\it allocation} at time $t$ is defined by a three-dimensional ($|B_t|\times|O_t|\times(\overline{t}-\underline{t})$) matrix
$A_t \equiv [a_{ijs}]$
for $i=1,\ldots,|B_t|$, $j=1,\ldots,|O_t|$, $s=\underline{t}+1,\ldots,\overline{t}$
where a binary variable $a_{ijs}$ is 1 if user $i$'s demand is allocated to provider $j$ at time slot $s$, and 0 otherwise.
We assume that a user's demand is satisfied when all of the numbers of requested instances is allocated. Also, we assume that user's instances are allocated to one provider at each time slot.
Thus, the system does not consider the partially allocated instances.
We denote the optimal allocation by $A_t^* \equiv [a_{ijs}^*]$, where the method to obtain it will be discussed in the next section.

Given the allocation $A_t$, the controller computes the best time to announce the allocation information (denoted by in Figure 1).
We call it {\it bid closing time},
which is denoted as $t^b = \min_{A_t} \{t_i^a\}$ where $t_i^a$ is the earliest time that user $i$'s demand is allocated according to $A_t$.
%
Figure~\ref{fig:example_allocation} illustrates an example of allocation for three users and one provider over the demand period of eight time slots.
In this example, the allocation is
$A_t=[\vec{a}_{11s},\vec{a}_{21s},\vec{a}_{31s}]=[(0,0,0,1,1,1,0,0)^T,(0,0,0,1,1,1,1,0)^T,(0,0,1,1,1,0,0,0)^T]$,
the demand period is $\underline{t}=t_3^s$ and $\overline{t}=t_2^e$, and
the bid closing time is $t^b=t_3^a$.
\begin{figure}[t]
  \centering
	\resizebox{0.75\linewidth}{!}{\includegraphics{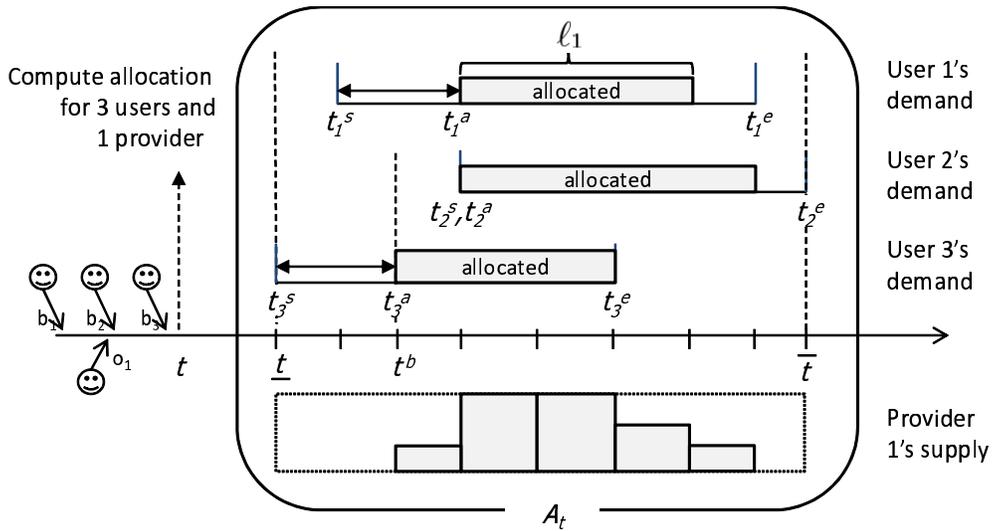}}
  \caption{An example of allocation in a group consisting of 3 users and 1 provider over the demand period of 8 time slots.}
  \label{fig:example_allocation}
\end{figure}

\section{The Proposed Group Auction System}
\label{sec:groupauctionsystem}

The instance allocation is formulated as a social welfare optimization problem.
The social welfare is iteratively improved by the proposed group formation algorithm that allows participants to make {\it group formation decisions} to increase their own payoffs.
This section first explains the group and the group structure defined in this paper (Section~\ref{subsec:groupstructure}) and presents in details instance allocation formulation (Section~\ref{subsec:instanceallocation}) and the method to obtain the nearly optimal solution by the group formation algorithm (Section~\ref{subsec:groupformationgame}).

\subsection{Group Structure}
\label{subsec:groupstructure}

A {\it group} denoted by $G$ consists of a set of users $G^u$ and a set of providers $G^p$, i.e., $G=G^u \cup G^p$.
Note that $G^u$ can be an empty set while $G^p$ is a non-empty set (i.e., including at least one provider).
Instances supplied by the providers are allocated to the users in the same group. 
Within a group, the users cooperate and get a discount when their instances of the same type are supplied by the same provider, and the providers are allowed to migrate instances among cooperating providers.

A {\it group structure} denoted by $\Omega_t$ is represented as a partition $\Pi_t$ with a set of associated links $L_t$ between users and providers.
The partition is a set of groups defined by $\Pi_t = ( G_0, G_1, \ldots, G_{m}, \ldots, G_M )$
where $G_m = \{ G^u_m, G^p_m\}$, the $m$-th group of users $G_m^u$ and providers $G_m^p$.
$G_0$ indicates a group of users without provider. The users in this group will not join an auction immediately due to resource limitation (i.e., they are likely to lose an auction) but wait for the next auction opportunity because the deadline has not been reached yet.
Denote $I_t$ as a set of all users with the bids $B_t$ and $J_t$ as a set of all providers with the offers $O_t$.
The partition holds following conditions $\cup_{m=0}^M G_m^u=I_t$ and $G_m^u \cap G_{m'}^u = \emptyset$ for all $m \neq m'$, and $\cup_{m=1}^M G_m^p=J_t$ and $G_m^p \cap G_{m'}^p = \emptyset$ for all $m \neq m'$.
Given the partition, a set of associated links between users and providers is defined by $L_t = \cup_{m=1}^M \{(i,j)| \forall i \in G_m^u, \exists j \in G_m^p\} \cup \{(i,0)|\forall i \in G_0\}$.
Figure~\ref{fig:example_groupstructure}(a) illustrates an example of the group structure formed by five users and four providers where $G_0=\{4\}, G_1=\{\{1,2,3\},\{1,2,4\}\}, G_2=\{\{5\},\{3\}\}$, and $L=\{(1,2),(2,1),(3,2),(4,0),(5,3)\}$.
\begin{figure}[t]
  \centering
	\resizebox{1.0\linewidth}{!}{\includegraphics{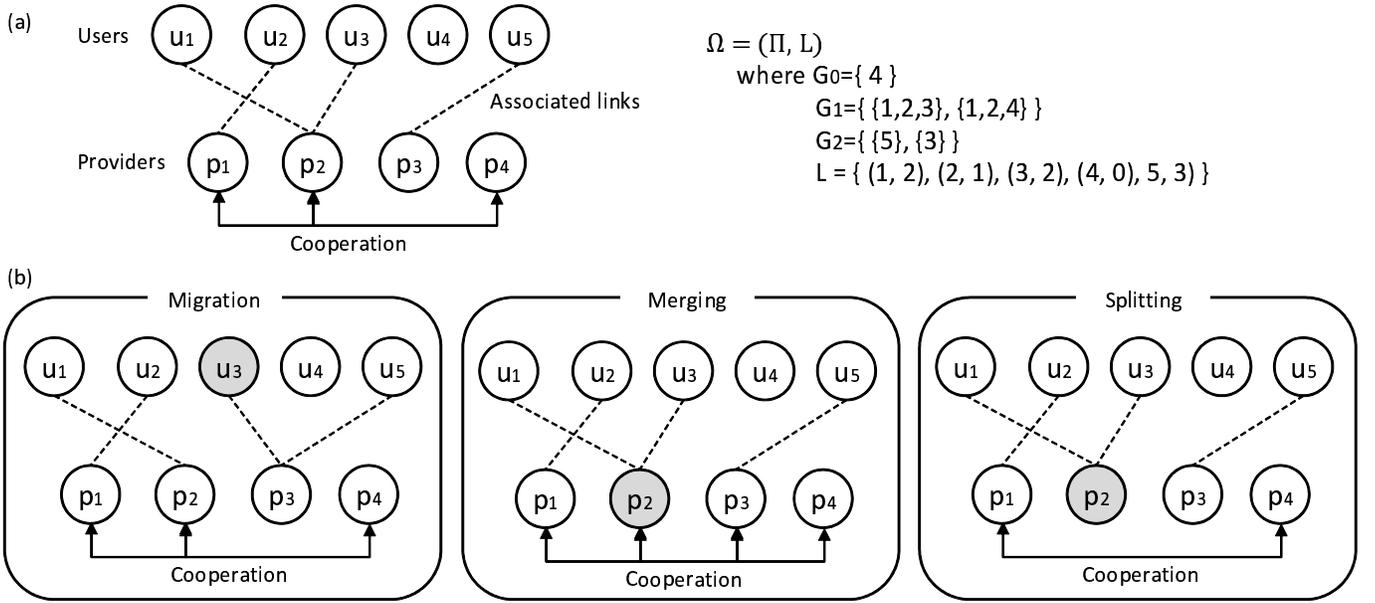}}
  \caption{(a) An example of group structure formed by 5 users and 4 providers. For the readability, we respectively label 'u' and 'p' for users and providers. (b) Example group structures after migration (by u$_3$), merging (by p$_2$), or splitting decision (by p$_2$).}
  \label{fig:example_groupstructure}
\end{figure}

\subsection{Instance Allocation}
\label{subsec:instanceallocation}

\subsubsection{Formulation of Social Welfare Optimization}
\label{subsec:optimization_formulation}
Users and providers are allowed to trade their instances in their own groups.
Given bids and offers in each group, the instance allocation of the group is formulated as an optimization problem to maximize the social welfare, i.e., the total utilities of the users and providers.

The utility of user $i$ in group $G$ with the allocation $A_t$ is given by
\begin{align}
\label{eqn:utility_user}
u_i(G, A_t)
&= v_i - c_i \nonumber \\
&= v_i - \sum_{s=t_i^s+1}^{t_i^e}  \sum_{j \in G^p} a_{ijs} \cdot p_{ijs}  \nonumber \\
&= \sum_{s=t_i^s+1}^{t_i^e} \sum_{j \in G^p} a_{ijs} \cdot (v_{is} - p_{ijs} ) 
\end{align}
where $c_i$ is a final charge to user $i$, $v_{is} \equiv \frac{v_i}{\ell_i}$ is user $i$'s valuation to instances $\vec{d_i}$ at time slot $s$, and $p_{ijs}$ is a trading price for the instances $\vec{d_i}$ between user $i$ and provider $j$ at time slot $s$.

The utility of provider $j$ in group $G$ with the allocation $A_t$ is given by
\begin{align}
\label{eqn:utility_prov}
u_j(G, A_t)
&= r_j - v_j  \nonumber \\
&= \sum_{i \in G^u} \sum_{s=t_i^s+1}^{t_i^e} a_{ijs} \cdot p_{ijs} - v_j \nonumber \\  
&= \sum_{i \in G^u} \sum_{s=t_i^s+1}^{t_i^e} a_{ijs} \cdot ( p_{ijs} - \sum_{k} d_i^k \cdot q_j^k[D_{js}^k] )
\end{align}
where $r_j$ is a provider $j$'s revenue and $v_j$ is a provider $j$'s valuation to the total amount of supplied instances $D_{js}^k = \sum_{i \in G^u} a_{ijs} \cdot d_i^k$ for $k=1,\ldots,K$.
%
Therefore, the instance allocation $A=[a_{ijs}]$ in group $G$ is formulated as follows:
\begin{align}
\label{eqn:optimization}
\max_{A_t} & \ \ \
\mathcal U(G, A_t) = \sum_{i \in G^u} u_i(G, A_t) + \sum_{j \in G^p} u_j(G, A_t)  \nonumber \\
&= \sum_{i \in G^u} \sum_{s=t_i^s+1}^{t_i^e} \sum_{j \in G^p} a_{ijs} \cdot (v_{is} - p_{ijs} ) 
   + \sum_{j \in G^p} \sum_{i \in G^u} \sum_{s=t_i^s+1}^{t_i^e} a_{ijs} \cdot ( p_{ijs} - \sum_{k} d_i^k \cdot q_j^k[D_{js}^k] ) \nonumber \\
&= \sum_{i \in G^u} \sum_{s=t_i^s+1}^{t_i^e} \sum_{j \in G^p} a_{ijs} \cdot ( \frac{v_i}{\ell_i} - \sum_{k} d_i^k \cdot q_j^k[D_{js}^k] ) \nonumber \\
&= \sum_{i \in G^u} x_i \cdot v_i - \sum_{i \in G^u} \sum_{s=t_i^s+1}^{t_i^e} \sum_{j \in G^p}  a_{ijs} \sum_{k} d_i^k \cdot q_j^k[D_{js}^k]  
\end{align}
subject to
\begin{align}
D_{js}^k = \sum_{i \in G^u} d_i^k \leq s_{js}^k,  \\
x_i \cdot \ell_i = \sum_{s=t_i^s+1}^{t_i^e} y_{is}, \\
y_{is} = \sum_{j \in G^p} a_{ijs}  \\
x_i \in \{0,1\}, \ y_{is} \in \{0,1\}, \ a_{ijs} \in \{0,1\} 
\end{align}
for $\forall i \in G^u$, $\forall j \in G^p$, $\underline{t} < s \leq \overline{t}$, $1 \leq k \leq K$
where $\underline{t} = \min_{} \{t_i^s | \forall i \in G^u\}$, $\overline{t}  = \max_{} \{t_i^e | \forall i \in G^u\}$.

Thus, the social welfare optimization problem is formulated to find the instance allocation that maximizes the total utilities of users and providers (Equation (3)) under constraints (4)-(6) of the amount of supplied instances and about the allocation.
Constraint (4) guarantees that the total number of requested/allocated instances does not exceed the total number of supplied instances.
Constraint (5) indicates that a user's demand is satisfied when all of the requested instances are allocated, which is specified by a binary variable $x_i$.
Constraint (6) indicates that user's instances are allocated to only one provider at each time slot, which is specified by a binary variable $y_{is}$.
Thus, the system does not consider the partially allocated instances.

\subsubsection{A Dynamic Algorithm to Find the Nearly Optimal Solution}
\label{subsec:optimization_algorithm}

The formulated binary integer programming is known to be NP-hard that the number of different allocations to be evaluated exponentially increases as the numbers of users and providers increase.
We propose a dynamic algorithm to find the nearly optimal solution in a reasonable time.

Algorithm~\ref{algInstanceAllocation} shows the dynamic algorithm to find the nearly optimal solution given bids $B_t$ and $O_t$ at time $t$.
The key idea of this algorithm is to give higher allocation priority to users who bid with higher valuation and providers who offer with lower valuation.
Besides, the algorithm allocates instances from the time slot that more users request their instances to the time slots that less users request because providers are likely to offer a discount price in such time slot.

At the beginning (Lines 1-9), the algorithm checks the users' demand periods and counts how many users request their instances over different time slots from $\underline{t}+1$ to $\overline{t}$.
$C$ is a count vector, whose element $C_s$ indicates the number of users whose demand periods include the time slot $s$.
Pos is a mapping between users and providers, whose element Pos$_i$ is a provider id that a user $i$ is assigned to. Pos$_i=0$ means that a user $i$ is not mapped to any providers.

In the following loop (Lines 12-37), the algorithm finds the allocation $A_t$ by starting to consider time slot $s$ in descending order of the counts $C$ (Line 12).
The first subloop (Lines 16-30) computes the best mapping between users and providers at time slot $s$.
According to the total number of instances requested by users in the same group (Line 21),
the users with relatively higher valuations are preferably assigned (Line 30) to providers with relatively lower valuations if such providers $j^*$ exist (Line 23).
If there is no such providers, then users consider other providers in different groups (Line 26).
If users cannot find any providers, then their demands cannot be assigned at this time slot (Line 25).
The second subloop (Lines 32-36) updates the recent allocation based on the computed mapping.
Once all instances requested by a user are mapped, the user $i$ and his bid $b_i$ are excluded (Line 36).

\begin{algorithm}[p]
\DontPrintSemicolon
\SetKwData{Pos}{Pos}
\SetKwData{Update}{update}
\SetKwData{GM}{$\overline{g}$}
\KwIn{$B_t$: a set of bids (from a set of users $I_t$) \\
\Indp\Indp\Indp
$O_t$: a set of offers (from a set of providers $J_t$) \\
}
\KwOut{$A_t=[a_{ijs}]$: an allocation \\
}
$A_t \leftarrow 0$;\\
\mbox{{\it /* Count the number of users whose demand periods are in timeslots $s$ */}} \\
$\underline{t} \leftarrow \min \{t^s_i \ | \ \forall b_i \in B_t\}$\\ 
$\overline{t} \leftarrow \max \{t^e_i \ | \ \forall b_i \in B_t\}$\\
\For{timeslot $s \leftarrow \underline{t}+1$ to $\overline{t}$} {
$C_s \leftarrow 0$ \\
\ForEach{user $i \in I_t$} {
\lIf{$t^s_i < s \leq t^e_i$} {$C_s \leftarrow C_s + 1$}
} 
} 
$C \leftarrow (C_{\underline{t}+1}, C_{\underline{t}+2}, \ldots, C_{\overline{t}})$ \hfill\mbox{{\it // A count vector}} \\
$\Pos \leftarrow (\Pos_1,\ldots,\Pos_{|I_t|})$ \hfill\mbox{{\it // A mapping between users and providers }} \\
\mbox{{\it /* Find the allocation from timeslot in descending order of $C_s$ */}} \\
\ForEach{timeslot $s$ in $sort(C, $'descend'$)$} {
$G_g \leftarrow \emptyset$ for $g=1,\ldots,|J_t|$  \\
\mbox{{\it /* Handling users in descending order of valuation $\frac{v_i}{\ell_i}$ */}} \\
$\mathcal U \leftarrow \emptyset$ \hfill\mbox{{\it // A set of allocated users }} \\
\ForEach{user $i \in I_t$ such that $t_i^s < s \leq t_i^e$} {
$\GM \leftarrow \min \{g \ | \ G_g = \emptyset, g=1,\ldots,|J_t| \}$  \\
$\mathcal P \leftarrow \emptyset$ \hfill\mbox{{\it // A set of allocated providers }} \\
\For{$g \leftarrow 1 \ to \ \GM$} {
\mbox{{\it /* $\mu = 1$ if $i \notin \mathcal U$, and $0$ otherwise */}} \\
$\vec{D_s} \leftarrow \mu \cdot \vec{d_i} + \sum_{m \in G_g} \vec{d_m}$ \\
\mbox{{\it /* Find a provider with the lowest valuation */}} \\
$j^* \leftarrow \arg\min_{j \in J_t} \{ v_j = \sum_{k} D_s^k \cdot q_j^k[D_s^k]$ $|$ $v_j \leq v_i$, $\vec{D_s} \leq \vec{s_{js}}, j \notin \mathcal P  \}$ \\
\uIf{$j^*==0$} {
\lIf{$g==\GM$}{$\Pos_i \leftarrow 0$} \\
\lElse{$\mathcal P \leftarrow \mathcal P \cup \{ \Pos_m$ $|$ $\exists m \in G_g\}$} \\
}
\Else{
$\mathcal U \leftarrow \mathcal U \cup \{i\}$ \\
$G_g \leftarrow G_g \cup \{i\}$     \\
$\Pos_m \leftarrow j^*$ for $\forall m \in G_g$ \\
}
} 
} 
\mbox{{\it /* Update the allocation information */}} \\
\ForEach{user $i \in I_t$} {
\If{$\Pos_i \neq 0$}{
$a_{i,\Pos_i,s} \leftarrow 1$; \ 
$\ell_i \leftarrow \ell_i-1$ \\
}
\If{$\ell_i==0$} {
$I_t \leftarrow I_t - \{i\}$; \ 
$B_t \leftarrow B_t - \{b_i\}$ \\
} 
} 
} 
\Return{$A_t = [a_{ijs}]$}
\caption{findInstanceAllocation($B_t$, $O_t$)}
\label{algInstanceAllocation}
\end{algorithm}


\subsubsection{Trading Price Determination}
\label{subsubsec:pricing}

Based on the allocation, the trading price is determined in a way that the social welfare is distributed to participants in proportional to their contributions, i.e., valuations.
First, the controller computes the total price charged to users for each group $G$.
The price at time slot $s$ is given as follows:
\begin{align}
\label{eqn:price}
p_{Gs}
&= \kappa \cdot \sum_{i \in G^u} v_{is} + (1-\kappa) \cdot  \sum_{j \in G^p} v_{js} \\
&= \kappa \cdot \sum_{i \in G^u} \frac{v_i}{\ell_i} + (1-\kappa) \cdot \sum_{j \in G^p} \sum_{k} D_{js}^k \cdot q_j^k[D_{js}^k]
\end{align}
where $\kappa \in [0,1]$.
$\kappa$ is set to 0.5 in this paper to give users and providers the fairness.

Second, the users in $G$ divide the total price $p_{Gs}$ in weights proportional to the user's valuation.
Hence, the price for user $i$ is defined by
\begin{align}
\label{eqn:tradingprice}
p_{ijs}
&= \frac{v_{is}}{\sum_{i \in G^u} v_{is}} \cdot p_{Gs}  \\
&= \frac{v_{is}}{V_{s}} \cdot (\kappa \cdot \sum_{i \in G^u} v_{is} + (1-\kappa) \cdot \sum_{j \in G^p} v_{js}) \\
&= a_{ijs} \cdot (\kappa \cdot v_{is} + (1-\kappa) \cdot \frac{v_{is}}{V_{s}} \cdot v_{js})		\label{eqn:pricing}
\end{align}
where $V_{s} = \sum_{i \in G^u} v_{is}$ is the total valuation of users in $G$.

Consequently, the cost (i.e., the total charge) for user $i$ is
\begin{align}
\label{eqn:cost}
c_{i}
= \sum_{s=t_i^s+1}^{t_i^e} \sum_{j \in J_t} p_{ijs}
, 
\end{align}
and the revenue for provider $j$ is
\begin{align}
\label{eqn:profit}
r_{j}
= \sum_{i \in I_t} \sum_{s=t_i^s+1}^{t_i^e} p_{ijs}
.
\end{align}


\subsection{A Group Formation Algorithm}
\label{subsec:groupformationgame}

In the proposed system, the social welfare is further improved by the group formation algorithm that leverages the concept of cooperation among users and providers (as described in Section~\ref{subsubsec:cooperation}).
The proposed group formation algorithm allows participants to make {\it group formation decisions} to increase their own payoffs.

\subsubsection{Payoff}
\label{subsubsec:payoff}

The payoff for user $i$ is designed as the difference between the utility and the delaying penalty by forming a group and defined as follows:
\begin{align}
\label{eqn:payoff_user}
\phi_i^u(G) = u_i(G, A_t^*) - \xi_i(G, A_t^*)
\end{align}
where $A_t^* = \arg \max_{A_t} \mathcal U(G, A_t)$, and the delaying penalty is defined by
\begin{align}
\label{eqn:penalty}
\xi_i (G, A_t^*)
&= \sum_{s=t_i^s+1}^{t_i^e} a_{ijs}^* \cdot z_{is} \cdot C^d \\
\mbox{where }
&  z_{is}
   = \left\{ 
     \begin{array}{l l}
         s - t_i^s - \ell_i & \quad \text{if $s > t_i^s + \ell_i$} \\
         0 & \quad \text{otherwise.}\\
     \end{array} \right. 
\end{align}
The actual cost $C^d$ may differ according to different participants, applications, and time slots. However, for simplicity, we assume that the actual cost is proportional to the delay.

Similarly, the payoff for provider $j$ is designed as the difference between the utility and the additional cost by migrating instances and defined as follows:
\begin{align}
\label{eqn:payoff_provider}
\phi_j^p(G) = u_j(G, A_t^*) - \xi_j(G, A_t^*)
\end{align}
where the cost for instance migration is defined by
\begin{align}
\label{eqn:utility}
\xi_j (G, A_t^*)
&= \sum_{s=\underline{t}+1}^{\overline{t}} m_{js} \cdot C^m \\
\mbox{where }
&  m_{js}
   = \left\{ 
     \begin{array}{l l}
         1 & \quad \text{if $a_{ijs}^*=1 \wedge a_{ij's-1}^*=1$, $j \neq j'$} \\
         0 & \quad \text{otherwise.}\\
     \end{array} \right.
\end{align}
The migration cost $C^m$ also differs among providers. However, we assume the value to be a constant for all providers.

\subsubsection{The Proposed Group Formation Algorithm}
\label{subsubsec:groupformationalgorithm}

For constructing a group formation, we introduce an algorithm that allows users and providers to make the group formation decisions for selecting which groups (i.e., providers) to join and which users to accept.
The decisions are called {\it migrating}, {\it merging}, and {\it splitting}.

{\it Migrating}:
Given a group structure $\Omega=(\Pi=(G_1, \ldots, G_m, \ldots, G_M), L)$,
any user $i \in G_m^u$ decides to migrate from its current associated provider $j \in G_m^p$ to another provider $j' \in G_{m'}^p$ where $m \neq m'$
if the following condition is satisfied:
\begin{align}
\label{eqn:mig1}
\phi^u_i(G_{m'} \cup \{i\}) & > \phi^u_i(G_m)  
.
\end{align}
The condition (\ref{eqn:mig1}) guarantees for a user to improve its own payoff by performing the migrating decision.
For every single migration of user $i$, the group structure $\Omega$ is updated to
$\Omega'=(\Pi', L')$ where
$\Pi'= (\Pi \backslash \{G_m, G_{m'}\}) \cup \{ G_m \backslash \{i\}, G_{m'} \cup \{i\} \}$, and
$L'= L \backslash (i,j) \cup (i,j')$.

{\it Merging}:
Multiple groups $G_m \in \Pi_s$, a subset of $\Pi$ (i.e., $\Pi_s \subseteq \Pi$) can collectively form a single group $G_{m'}^{\dagger}$
if the following conditions are satisfied:
\begin{align}
\label{eqn:mer1}
\mbox{$\exists j \in G_{m}^p$}, \phi^u_{j}(G_{m'}^\dagger) & > \phi^u_{j}(G_{m}) 
    \mbox{, $G_m \in \Pi_s$} \\
\label{eqn:mer2}
\phi^u_{i}(G_{m'}^\dagger) & \geq \phi^u_{i}(G_{m}) 
    \mbox{, $\forall i \in G_{m}^u$, $\forall G_m \in \Pi_s$} \\
\label{eqn:mer3}
\phi^p_{j'}(G_{m'}^\dagger) & \geq \phi^p_{j'}(G_{m}) 
    \mbox{, $\forall j' \in G_{m}^p$, $j' \neq j$, $\forall G_m \in \Pi_s$} 
\end{align}
where $G_{m'}^\dagger = \cup_{\Pi_s} G_m$.
The first condition in (\ref{eqn:mer1}) guarantees for a provider to improve its own payoff by performing the merging decision. The other conditions guarantee that none of the other members (both users and providers) in the newly formed group decreases their payoffs by the merging decision.
For every single merging decision, the group structure $\Omega$ is updated to
$\Omega^{\dagger}=(\Pi^{\dagger}, L)$ where
$\Pi^{\dagger}= (\Pi \backslash \{\forall G_m \in \Pi_s\}) \cup G_{m'}^{\dagger}$ where $|\Pi^\dagger| < |\Pi|$.

{\it Splitting}:
Given the original group $G_m$, members in this group can collectively split into multiple groups $G_{m'}^{\ddagger}$, whose set after the splitting is denoted by $\Pi_s^{\ddagger}$,
if the following conditions are satisfied:
\begin{align}
\label{eqn:spl1}
\exists j \in G_{m}^p, \exists G_{m'}^\ddagger \in \Pi_s^\ddagger, \phi^p_{j}(G_{m'}^\ddagger) & > \phi^p_{j}(G_{m}) 
    \mbox{, $G_m \in \Pi_s$} \\
\label{eqn:spl2}
\exists G_{m'}^\ddagger \in \Pi_s^\ddagger, \phi^u_{i}(G_{m'}^\ddagger) & \geq \phi^u_{i}(G_{m}) 
    \mbox{, $\forall i \in G_{m}^{u}$} \\
\label{eqn:spl3}
\exists G_{m'}^\ddagger \in \Pi_s^\ddagger, \phi^p_{j'}(G_{m'}^\ddagger) & \geq \phi^p_{j'}(G_{m}) 
    \mbox{, $\forall j' \in G_{m}^{p}$, $j' \neq j$}
\end{align}
where $G_{m} = \cup_{\Pi_s^\ddagger} G_{m'}^\ddagger$ and
for all $G_{m'}^\ddagger$ there exist $(i,j) \in L$ such that $i \in G_{m'}^{\ddagger u}$ and $j \in G_{m'}^{\ddagger p}$.
The first condition in (\ref{eqn:spl1}) guarantees for a provider to improve its own payoff by performing the splitting decision. 
The other conditions guarantee that none of the members in newly formed groups decreases their payoffs by the splitting decision.
For every single splitting decision, the group structure $\Omega$ is updated to
$\Omega^{\ddagger}=(\Pi^{\ddagger}, L)$ where
$\Pi^{\ddagger}= (\Pi \backslash G_m) \cup \{ \forall G_{m'}^{\ddagger} \in \Pi_s^\ddagger  \}$ where $|\Pi^\ddagger| > |\Pi|$.


Figure~\ref{fig:example_groupstructure}(b) illustrates example group structures after migrating, merging, or splitting decision given the group structure in Figure~\ref{fig:example_groupstructure}(a).
For example, after the migration by u$_3$, the group structure changes to $G_0=\{4\}, G_1=\{\{1,2\},\{1,2,4\}\}, G_2=\{\{3,5\},\{3\}\}$, and $L=\{(1,2),(2,1),(3,3),(4,0),(5,3)\}$.
After the merging by p$_2$, the groups 1 and 2 become one group $G_1=\{\{1,2,3,5\},\{1,2,3,4\}\}$ with the same associated links.
After the splitting by p$_2$, the group 1 is splitted into $G_1=\{\{2\},\{1,4\}\}, G_3=\{\{1,3\},\{2\}\}$ with the same associated links.

Algorithm~\ref{algGroupFormation} shows a pseudocode of the proposed group formation algorithm that finds a stable group structure.
At the beginning (Line 3), the algorithm initializes a group structure $\Omega$ by randomly setting groups and associated links (Lines 4 and 5 in Algorithm~\ref{algGroupFormation-init}). 
A randomly initialized group structure converges to the stable group structure for participants by repeatedly selecting distributed decisions (Lines 9, 13, and 16) that improve both users and providers' payoffs.
During the iterations (Lines 4-18), a history set $H$ is maintained, which stores the groups that are formed or evaluated in the past.
To compute the payoffs, Algorithm 1 is executed. 
The algorithm stops when none of the participants changes their groups (Line 18).

\begin{algorithm}[t]
\DontPrintSemicolon
\SetKwData{Itr}{itr}
\KwIn{$B_t$: a set of bids (from a set of users $I_t$) \\
\Indp\Indp\Indp
$O_t$: a set of offers (from a set of providers $J_t$) \\
$\Omega_t$: the current group structure at time $t$ \\
}
\KwOut{$\Omega_t^F$: the final group structure \\
}
$H \leftarrow \emptyset$ \hfill\mbox{{\it // A history set of previously seen groups}} \\
\mbox{{\it /* Randomly initialize a group structure at the beginning */}} \\
\lIf{$\Omega_t$ == $\emptyset$}{
$\Omega \leftarrow initGroupStructure(I_t, J_t)$ \hfill\mbox{{\it // Algorithm 3}} \\
}
\Repeat{None of users and providers changes their groups.} {
\mbox{{\it /* Given the current group structure $\Omega$ */}} \\
\ForEach{ user $i \in I_t$} {
\mbox{{\it /* $i$ searches a provider $j'$ that satisfies (\ref{eqn:mig1}).}} \\
\mbox{{\it \ \ \ If such $j'$ is found, and $G_{m'} \cup {i} \notin H$, then $i$ performs a migrating operation. */}} \\
$\Omega_t \leftarrow Migrating(i, \Omega)$ \\
}
\ForEach{ provider $j \in J_t$} {
\mbox{{\it /* $j$ searches a possible group $G_{m'}$ that satisfies (\ref{eqn:mer1})-(\ref{eqn:mer2}).}} \\
\mbox{{\it \ \ \ If such group is found, and $G_{m'} \notin H$, then $j$ performs a merging operation. */}} \\
$\Omega_t \leftarrow Merging(\Omega, j)$ \\
\mbox{{\it /* $j$ searches a possible partition of its own group $G_m$, which satisfy (\ref{eqn:spl1})-(\ref{eqn:spl2}).}} \\
\mbox{{\it \ \ \ If such partition is found, and $G_{m'}^\ddagger \notin H$, then $j$ performs a splitting operation. */}} \\
$\Omega_t \leftarrow Splitting(\Omega, j)$ \\
}
$H \leftarrow H \cup G, \forall G \in \Pi$ of $\Omega_t$
}
\Return{ $\Omega^F_t \leftarrow \Omega_t$}
\caption{findGroupStructure($B_t$, $O_t$, $\Omega_t$)}
\label{algGroupFormation}
\end{algorithm}

\begin{algorithm}[t]
\DontPrintSemicolon
\SetKwData{Itr}{itr}
\KwIn{$I_t$: a set of users, $J_t$: a set of offers  \\
}
\KwOut{$\Omega^0$: an initial group structure \\
}
$G_j^u, G_j^p \leftarrow \emptyset$ \mbox{, for $j=1,\ldots,|J_t|$} \\
\ForEach{ user $i \in I_t$} {
$m \leftarrow rand(1,|J_t|)$ \\
$G_{m}^u \leftarrow G_{m}^u \cup \{i\}$ \\
$L \leftarrow L \cup (i,m)$ \\
}
\ForEach{ provider $j \in J_t$} {
$G_j \leftarrow G_j^u \cup (G_j^p = \{j\})$ \\
}
Return{ $\Omega^0 = \{G_1, \dots, G_{|J_t|}\}$}
\caption{initGroupStructure($I_t$, $J_t$)}
\label{algGroupFormation-init}
\end{algorithm}


\section{Analysis of the Group Formation Algorithm}
\label{sec:analysis}

This section gives a theoretical analysis of the proposed group formation algorithm in terms of the time complexity and stability.
Subsequently, we discuss that the proposed group auction system holds three auction properties described in Section~\ref{subsec:overview}.

\subsection{Time complexity}

The binary integer programming formulated in Section~\ref{subsec:optimization_algorithm} is known to be NP-hard that the number of different allocations to be evaluated exponentially increases as the numbers of users and providers increase.
For example, the numbers of users and providers are $N$ and $M$, respectively. There exist $T(N,M)$ possible mappings (i.e., instance allocations) between the users and providers where $T(n,m)=\sum_{i=0}^{n} \binom{n}{i} \cdot T(n-i,m-1)$ when $T(0,k) \equiv 1$ for $\forall k \in \mathbb N^0$ and $T(k^+,0) \equiv 0$ for $\forall k^+ \in \mathbb N^+$.\footnote{$\mathbb N^0$ indicates a set of non-negative integers, and $\mathbb N^+$ is a set of positive integers.}
In addition, if we consider the number of time slots $S$ in a demand period, then the number of possible combinations of the mappings increases to $T(N,M)^S$, which is factorial $O(N!)$ in O-notation.

The proposed group formation by Algorithm~\ref{algGroupFormation} is quadratic to the number of participants.
The algorithm contains a loop (Lines 4-18) that evaluates decisions of participants. For each single decision, we need to compute the instance allocation by Algorithm ~\ref{algInstanceAllocation} whose complexity is linear $O(S$$\cdot$$N$$\cdot$$M)$ to participants and time slots.
If the loop is executed for $k$ times, then Algorithm 1 is run for $k \cdot (N$+$M)$ times.
Therefor, the complexity of the proposed group formation is $O(k$$\cdot$$(N$+$M)$$\cdot$$S$$\cdot$$N$$\cdot$$M)$ $\sim$ $O(k$$\cdot$$S$$\cdot$$N^2)$ in O-notation when $N \gg M$.

\subsection{Stability}

The stability of the proposed group formation algorithm is guaranteed by the concept of Nash equilibrium described in \cite{Bogomonlaia02}.
We relax the stability conditions by considering a relaxing parameter $\epsilon \geq 0$ (epsilon) as follows.


\newtheorem{df}{Definition}
\newtheorem{thm}{Theorem}
\newtheorem{pro}{Proposition}

\begin{thm}
\label{thm1}
Given any initial group structure, the proposed group formation algorithm always terminates, i.e., converges to a final group structure.
\end{thm}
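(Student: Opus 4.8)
The plan is to prove termination by a finiteness-plus-monotonicity argument organized around the history set $H$, rather than by exhibiting a natural potential for the improvement dynamics --- the latter being notoriously delicate for hedonic coalition formation (cf.~\cite{Bogomonlaia02}). First I would observe that the universe of objects the algorithm manipulates is finite: a group is a pair $(G^u,G^p)$ with $G^u\subseteq I_t$ and $G^p\subseteq J_t$, so there are at most $2^{|I_t|+|J_t|}$ possible groups; likewise there are only finitely many partitions $\Pi$ of $I_t\cup J_t$ and finitely many admissible link sets $L$, hence finitely many group structures $\Omega$. Since $H$ never shrinks (it is only ever enlarged at the close of each pass of the \textsc{Repeat} loop), $H$ lives inside this finite universe and $|H|$ is bounded by $2^{|I_t|+|J_t|}$.

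Second, I would show that every pass of the \textsc{Repeat} loop (Lines 4--18 of Algorithm~\ref{algGroupFormation}) that executes at least one \emph{migrating}, \emph{merging} or \emph{splitting} operation strictly enlarges $H$. The mechanism is the guard attached to each operation: a migration of $i$ is performed only if $G_{m'}\cup\{i\}\notin H$, a merge only if $G^{\dagger}_{m'}\notin H$, and a split only if the new groups $G^{\ddagger}_{m'}\notin H$, where $H$ here is the history \emph{as it stood at the start of the pass}, since $H$ is updated only once the pass finishes. Within a single pass only finitely many operations can occur --- each user is examined once for migrating and each provider once for merging and once for splitting, with at most one operation per examination --- so the pass has a well-defined \emph{last} operation; the group (or groups) it creates are, by the guard, absent from the start-of-pass history and, being produced last, are not destroyed by any later operation and therefore belong to the partition $\Pi$ at the end of the pass, whereupon they are inserted into $H$. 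Hence a non-trivial pass strictly increases $|H|$.

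Combining the two observations, there can be at most $2^{|I_t|+|J_t|}$ non-trivial passes; after the last of them, the next pass performs no operation, so no participant changes group, the \textsc{Repeat}--\textsc{until} guard (``none of users and providers changes their groups'') is satisfied, the loop exits, and the algorithm returns a final group structure $\Omega_t^F$ --- which is precisely the asserted convergence. The relaxation parameter $\epsilon\ge 0$ introduced just above reshapes only the numeric operation guards~(\ref{eqn:mig1})--(\ref{eqn:spl3}) and not the $\notin H$ test, so it leaves this argument intact. I would add, as a remark that also prepares the later stability analysis, that the merging and splitting conditions~(\ref{eqn:mer1})--(\ref{eqn:spl3}) are Pareto improvements for all members of the reorganized groups and therefore already strictly increase the social welfare $\mathcal U$, so those two moves could not cycle even without $H$; it is the migrating move~(\ref{eqn:mig1}), which only guarantees a gain for the single migrating user while the payoffs of the remaining members of the affected groups may move either way, that makes a potential-based proof fail, and that is exactly the move the history set is designed to tame.

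The step I expect to be the main obstacle is the bookkeeping in the second paragraph: making fully rigorous the claim that a non-trivial pass leaves a genuinely new group in its final partition. One must argue that the last operation's output is absent from the start-of-pass $H$, that no subsequent operation within the same pass removes it, and handle the facts that splitting emits several groups at once and that one and the same new group may be created, destroyed and re-created during a single pass --- none of which actually breaks the argument, but all of which must be spelled out. A secondary point to settle is that the \textsc{until} guard is logically equivalent to ``no operation was performed in this pass'', and that the deterministic allocation routine Algorithm~\ref{algInstanceAllocation} makes each payoff $\phi_i^u(G)$ and $\phi_j^p(G)$ a well-defined function of the group, so that the test ``$G\notin H$'' is meaningful and stable across passes.
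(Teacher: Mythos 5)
Your proof is correct and rests on the same two pillars as the paper's own argument: the finiteness of the universe of possible groups/group structures, and the history set $H$ forbidding revisits, so that the improvement dynamics cannot cycle. The only substantive difference is one of granularity — the paper counts group structures (bounded by the Bell number) and asserts that every single decision reaches a structure not yet visited, whereas you count groups (bounded by $2^{|I_t|+|J_t|}$) and show that every non-trivial pass of the repeat loop strictly enlarges $H$; your pass-level accounting is arguably the more faithful to the algorithm as written, since $H$ stores groups rather than structures and is updated only once at the end of each pass, which is exactly the subtlety your second paragraph takes care to handle.
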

\begin{proof}
Let $\Omega^{r}_{n_r}$ denote the group structure formed after $n_r$ decisions made by one or more participants during $r$ iterations of Algorithm~\ref{algGroupFormation} (Lines 4-18).
Starting from any initial group structure $\Omega^0_0$, the proposed algorithm iteratively transforms the group structure into another group structure by taking a sequence of migrating, merging, and/or splitting decisions.
The sequence of decisions yields the following transformations of the group structure
\begin{align}
\label{transformation}
\Omega^{0}_{0} = \Omega^{1}_{0} \rightarrow \Omega^{1}_{1} \rightarrow \cdots \rightarrow \Omega^{1}_{n_1} = \Omega^{2}_{n_1} \rightarrow \cdots \rightarrow \Omega^{r}_{n_r} = \Omega^{r+1}_{n_r} \rightarrow \cdots \rightarrow \Omega^{T}_{n_T}
\end{align}
where the operator $\rightarrow$ indicates the occurrence of one of three decisions.
In other words, $\Omega^{r}_{n_r} \rightarrow \Omega^{r+1}_{n_{r+1}}$ implies that during the $(r+1) th$ iteration, $n_{r+1} - n_r$ decisions are made by participants, which results in a new group structure $\Omega^{r+1}_{n_{r+1}}$ starting from a group structure $\Omega^{r}_{n_{r}}$.
By inspecting the conditions for the decision making defined in (\ref{eqn:mig1})-(\ref{eqn:spl2}) and a history set of previously seen groups, it can be seen that every single decision leads to a new group structure that has not yet visited.
Hence, for any two group structures $\Omega^{a}_{n_{a}}$ and $\Omega^{b}_{n_{b}}$ where $n_a \neq n_b$ in the transformations in (\ref{transformation}), it is true that $\Omega^{a}_{n_{a}} \neq \Omega^{b}_{n_{b}}$.

Given this property and the well known fact that the number of possible group structures formed by a finite set of participants (i.e., $I$ and $J$) is finite and given by the Bell number \cite{DRay07}, the number of transformations is finite.
Therefore, the proposed group formation algorithm always terminates and converges to one group structure $\Omega^T_{n_T} \equiv \Omega^F$, a final group structure.
\end{proof}

\begin{df}[$\epsilon$-Nash equilibrium]
A group structure $\Omega = (\Pi, L)$ is $\epsilon$-{\it Nash-stable} if for all users $i \in I$, $\phi_i^u(G_m) \geq \phi_i^u(G_{m'} \cup \{i\}) - \epsilon$, $i \in G_m^u$, and for all providers $j \in J$, $\phi_j^p(G_m) \geq \phi_j^p(G_{m'} \cup \{j\}) - \epsilon$, $j \in G_m^p$, for all other groups $G_{m'} \in \Pi$, $G_m \neq G_{m'}$.
Nash equilibrium is a special case of $\epsilon$-Nash equilibrium with $\epsilon=0$.
\end{df}
In the $\epsilon$-Nash-stable group structure, no users/providers can gain an extra payoff more than $\epsilon$ by  unilaterally moving from its current group to another group.

\begin{pro}
\label{pro:stability}
Any final group structure $\Omega^F$ obtained by the proposed group formation algorithm is $\epsilon^*$-Nash-stable.
\end{pro}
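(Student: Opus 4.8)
The plan is to read the relaxation parameter $\epsilon^*$ off the configuration at which Algorithm~\ref{algGroupFormation} stops, and then observe that every unilateral deviation from that configuration is either non-improving or ``blocked'' by the history set $H$, with the latter's gain bounded by $\epsilon^*$ by construction. By Theorem~\ref{thm1} the algorithm halts at a well-defined final structure $\Omega^F=(\Pi^F,L^F)$, and by inspection of Algorithm~\ref{algGroupFormation} it halts only when a full pass through Lines~4--18 changes nothing; hence at $\Omega^F$ every operation a participant could perform is suppressed. Concretely, for a user $i\in G^u_m$ and any other group $G_{m'}\in\Pi^F$, \emph{either} the migrating condition~(\ref{eqn:mig1}) fails, i.e.\ $\phi^u_i(G_{m'}\cup\{i\})\le\phi^u_i(G_m)$, \emph{or} it holds but $G_{m'}\cup\{i\}$ already lies in $H$ (otherwise the migrating step would have fired); the merging/splitting conditions~(\ref{eqn:mer1})--(\ref{eqn:spl3}) give the analogous dichotomy for a provider $j\in G^p_m$.

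Accordingly, I would define
\begin{align}
\epsilon^* \;=\; \max\Big(0,\ \max_{i\in I,\;G'\in\Pi^F}\big[\phi^u_i(G'\cup\{i\})-\phi^u_i(G^F_i)\big],\ \max_{j\in J,\;G'\in\Pi^F}\big[\phi^p_j(G'\cup\{j\})-\phi^p_j(G^F_j)\big]\Big),
\end{align}
where $G^F_i$ (resp.\ $G^F_j$) is the group of $\Omega^F$ containing user $i$ (resp.\ provider $j$); this is a maximum over a finite set, since $I$, $J$ and $\Pi^F$ are finite (the number of distinct groups is bounded by the Bell number, as used in the proof of Theorem~\ref{thm1}), so $\epsilon^*\ge 0$ is well defined. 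With this choice the verification is immediate: fix $i\in G^u_m$ in $\Omega^F$ and any $G_{m'}\in\Pi^F$ with $G_{m'}\neq G_m$; if $\phi^u_i(G_{m'}\cup\{i\})\le\phi^u_i(G_m)$ then $\phi^u_i(G_m)\ge\phi^u_i(G_{m'}\cup\{i\})-\epsilon^*$ because $\epsilon^*\ge 0$, and otherwise~(\ref{eqn:mig1}) holds, so termination forces $G_{m'}\cup\{i\}\in H$ and the gain $\phi^u_i(G_{m'}\cup\{i\})-\phi^u_i(G_m)$ is at most $\epsilon^*$ by definition; either way $\phi^u_i(G_m)\ge\phi^u_i(G_{m'}\cup\{i\})-\epsilon^*$. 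Repeating the argument for every provider $j$, with~(\ref{eqn:mer1})--(\ref{eqn:spl3}) supplying the blocked-deviation case, yields $\phi^p_j(G_m)\ge\phi^p_j(G_{m'}\cup\{j\})-\epsilon^*$ for all $j\in G^p_m$ and all $G_{m'}$, which is exactly the condition in Definition~1 for $\Omega^F$ to be $\epsilon^*$-Nash-stable.

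The main obstacle is to make the provider side of the dichotomy rigorous: unlike users, providers never execute a stand-alone ``migrate'' step, so I must argue that a provider's profitable unilateral move into another group corresponds to an instance of the merging operation (or, read from the enlarged group, a splitting operation), and that conditions~(\ref{eqn:mer1})--(\ref{eqn:mer3}) and~(\ref{eqn:spl1})--(\ref{eqn:spl3}), together with the $H$-check, are precisely what would trigger that move when it is both profitable and not previously seen. A secondary point to state explicitly is that $\epsilon^*$ is run-dependent --- it is determined by the particular $\Omega^F$ (hence by $H$) the algorithm happens to reach --- which is fully consistent with the statement of the proposition and with $\epsilon$ being an arbitrary nonnegative parameter in Definition~1; and it is worth remarking that without the history set the natural bound would be $\epsilon^*=0$, so the history mechanism, needed for termination in Theorem~\ref{thm1}, is exactly what can force a strictly positive relaxation here.
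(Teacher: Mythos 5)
Your proposal is correct and follows essentially the same route as the paper: the paper likewise defines $\epsilon^*$ (as a pair $(\epsilon^{*u},\epsilon^{*p})$ rather than your single scalar) to be the maximum non-negative unilateral gain available at $\Omega^F$ over all participants and all groups in $\Pi^F$, after which $\epsilon^*$-Nash-stability holds by construction. One point where you are actually more careful than the paper: for users, the paper's proof asserts a flat contradiction with convergence whenever a profitable migration exists, whereas you correctly observe that such a migration can also be suppressed by the history set $H$, which is precisely what can make $\epsilon^{*u}$ strictly positive; the paper attributes a positive $\epsilon^*$ only to providers blocked by conditions (\ref{eqn:mer2})--(\ref{eqn:mer3}) and (\ref{eqn:spl2})--(\ref{eqn:spl3}).
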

\begin{proof}
Assume that the group structure $\Omega^{F}=(\Pi^{F}, L^{F})$ obtained by the proposed algorithm is not Nash-stable (i.e., $\epsilon=0$). There are two cases to be considered.

Case 1:
There exists a user $i$ in $G_m$ and a group $G_{m'} \in \Pi^{F}$ such that $\phi_i^u(G_{m'} \cup \{i\}) > \phi_i^u(G_m)$.
Therefore, the user can perform a migrating decision to improve its own payoff.
Similarly, there exists a provider $j$ in $G_m$ and a group $G_{m'} \in \Pi^{F}$ such that $\phi_j^p(G_{m'} \cup \{j\}) > \phi_j^p(G_m)$.
This implies that the provider can perform merging and splitting decisions to improve its own payoff under the true conditions in (\ref{eqn:mer2}) and (\ref{eqn:mer3}), and (\ref{eqn:spl2}) and (\ref{eqn:spl3}).
This contradicts with the fact that the group structure $\Omega^{F}$ obtained by the proposed algorithm is the converged group structure, proved in Theorem~\ref{thm1}.
 
Case 2: 
Although the provider has an incentive to move from its current group to another group, no further transformation by performing a merging (or splitting) decision can be made when the condition in (\ref{eqn:mer2}) or (\ref{eqn:mer3}) (or (\ref{eqn:spl2}) or (\ref{eqn:spl3})) is false.
If we relax the stability conditions by considering a relaxing parameter $\epsilon^* > 0$, then $\Omega^F$ is $\epsilon^*$-Nash-stable where
$\epsilon^* \equiv (\epsilon^{*u},\epsilon^{*p})$ is defined by
\begin{align*}
\epsilon^{*u} &= \max_{\forall i \in I, \forall G_{m'} \in \Pi^F} \Big\{ \max \Big(0, \phi_i^u(G_m' \cup \{i\}) - \phi_i^u(G_m) \Big) \Big\} \mbox{, and} \\
\epsilon^{*p} &= \max_{\forall j \in J, \forall G_{m'} \in \Pi^F} \Big\{ \max \Big(0, \phi_j^p(G_m' \cup \{j\}) - \phi_j^p(G_m) \Big) \Big\}.
\end{align*}
\end{proof}

As we can easily see, the lower value of $\epsilon^*$ is preferred in terms of the perfection of the Nash stability.

\subsection{Auction Properties}

This section discusses about three auction properties (allocative efficiency, individual rationality, and budget balance)
presented in Section~\ref{subsec:overview} and shows that the proposed group auction theoretically holds the properties.

\begin{pro}
The instance allocation associated with the group structure obtained by the proposed group formation algorithm is allocatively efficient.
\end{pro}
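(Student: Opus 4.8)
The plan is to reduce allocative efficiency to two facts already in hand: within every group the instance allocation $A_t^*$ is, by construction, a maximizer of the group's social welfare $\mathcal U(G,A_t)=\sum_{i\in G^u}u_i(G,A_t)+\sum_{j\in G^p}u_j(G,A_t)$, and the final group structure $\Omega^F$ is $\epsilon^*$-Nash-stable (Theorem~\ref{thm1} and Proposition~\ref{pro:stability}). First I would unfold the definition: an allocation is allocatively efficient when no participant can raise its own utility through a change that lowers some other participant's utility, i.e. there is no feasible reallocation that is a strict Pareto improvement for the participants involved. So the goal is to show that for any candidate deviation, either some participant's utility strictly decreases or there is no net gain at all.

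Next I would treat within-group deviations. Fix a group $G_m\in\Pi^F$ and suppose some participant in $G_m$ could gain by a reallocation $A'_t$ that preserves group membership. Since $A_t^*=\arg\max_{A_t}\mathcal U(G_m,A_t)$, we have $\mathcal U(G_m,A'_t)\le\mathcal U(G_m,A_t^*)$, so the sum of utility changes over the members of $G_m$ is non-positive; hence any strict gain for one member is necessarily offset by a strict loss for at least one other member, which is exactly the absence of a within-group Pareto improvement. I would also observe that the trading-price rule of Equations~(\ref{eqn:price})--(\ref{eqn:pricing}) merely redistributes $\mathcal U(G_m,\cdot)$ among members without creating or destroying welfare, so the argument is independent of how prices are set.

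Then I would handle cross-group deviations, where the stability of $\Omega^F$ enters. A reallocation that moves instances between two distinct groups of $\Pi^F$ corresponds to one of the migrating, merging, or splitting reconfigurations; since Algorithm~\ref{algGroupFormation} has terminated at $\Omega^F$, no such reconfiguration improves the initiating participant's payoff beyond the slack $\epsilon^*$ while leaving the other affected participants no worse off --- this is precisely conditions~(\ref{eqn:mig1})--(\ref{eqn:spl3}) failing and the content of Proposition~\ref{pro:stability}. Consequently any cross-group deviation that benefits a participant must violate some other participant's payoff constraint, i.e. it strictly decreases another participant's payoff. Combining the within-group and cross-group cases gives the claim.

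The main obstacle I anticipate is the gap between the heuristic Algorithm~\ref{algInstanceAllocation}, which is only guaranteed to return a \emph{nearly} optimal $A_t$, and the exact optimality needed for a clean within-group Pareto argument; I would address this either by stating the property relative to the true optimizer $A_t^*=\arg\max_{A_t}\mathcal U(G,A_t)$ (as the payoff definitions~(\ref{eqn:payoff_user}) and~(\ref{eqn:payoff_provider}) already do), or by carrying the $\epsilon^*$-slack through so the conclusion reads ``allocatively efficient up to $\epsilon^*$,'' consistent with Proposition~\ref{pro:stability}. A secondary point to be careful about is the role of the delaying and migration penalties $\xi_i,\xi_j$: efficiency should be read with respect to the payoffs $\phi_i^u,\phi_j^p$ rather than the raw utilities, so I would first align the social-welfare objective used to define $A_t^*$ with the payoff functions before invoking optimality.
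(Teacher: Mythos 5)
Your cross-group step has a genuine gap at exactly the point the paper's proof has to work hardest. You argue that because $\Omega^F$ is stable, ``no such reconfiguration improves the initiating participant's payoff \ldots while leaving the other affected participants no worse off --- this is precisely conditions~(\ref{eqn:mig1})--(\ref{eqn:spl3}) failing.'' But the three decisions are not symmetric: merging and splitting do carry the protective clauses (\ref{eqn:mer2})--(\ref{eqn:mer3}) and (\ref{eqn:spl2})--(\ref{eqn:spl3}), so for providers your argument matches the paper's. Migration, however, is governed by (\ref{eqn:mig1}) alone, which tests only the migrating user's own payoff and imposes \emph{no} condition on the other members of either the source or destination group. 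So stability of $\Omega^F$ tells you nothing about whether the migrations that \emph{were} executed on the way to $\Omega^F$ hurt other users, and that is precisely what the property (``no participant gains utility from decreasing others' utility'') forbids. The paper closes this hole with a separate, non-game-theoretic argument: the price curves $q_j^k[\cdot]$ are non-increasing in the number of instances sold, so a user joining a group weakly lowers the unit price faced by the users already there, hence a self-interested migration cannot decrease other users' utilities. Your proposal never invokes the price-curve monotonicity, and without it the user case does not go through.

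Two smaller points. First, you unfold allocative efficiency as ``no feasible reallocation that is a strict Pareto improvement,'' but the paper's definition is different: no participant gains \emph{by decreasing others' utility}. A deviation that hurts someone is by construction not a Pareto improvement, so the two notions are not equivalent, and your within-group argument (total welfare is maximized, hence any gain is offset by a loss) establishes a Pareto-type statement rather than the stated property; the paper does not use within-group optimality of $A_t^*$ at all. Second, your closing remarks about the heuristic-versus-exact-optimizer gap and about aligning utilities with the penalized payoffs $\phi_i^u,\phi_j^p$ are legitimate concerns that the paper glosses over, but they do not substitute for the missing price-curve argument.
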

\begin{proof}
Given a group structure and the associated instance allocation, assume that there exists a provider who can gain extra utility by merging existing groups or splitting its current group.
However, the conditions in (\ref{eqn:mer2}) and (\ref{eqn:mer3}), or (\ref{eqn:spl2}) and (\ref{eqn:spl3}) for the decisions restrict the merging or splitting which reduces the others' utilities. Thus, the provider does not perform the decision.
For users, migrating to the other group does not decrease the other users' utilities due to the assumption of the non-increasing price curve. That is, a trading price of instances does not increase when the number of users increases.
Consequently, no providers gain utility by decreasing the others' utilities.
\end{proof}

\begin{pro}
The instance allocation associated with the group structure obtained by the proposed group formation algorithm is individually rational.
\end{pro}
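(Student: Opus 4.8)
The plan is to establish individual rationality directly on both sides of the market from the pricing rule of Section~\ref{subsubsec:pricing} together with the provider-selection step of Algorithm~\ref{algInstanceAllocation}. By the utility definitions~(\ref{eqn:utility_user}) and~(\ref{eqn:utility_prov}), individual rationality amounts to $c_i\le v_i$ for every user $i$ and $r_j\ge v_j$ for every provider $j$; observe that the delay penalty $\xi_i$ and the migration cost $\xi_j$ enter the payoffs $\phi$ but not the charges $c_i$ or the revenues $r_j$, so they play no role here. A user placed in $G_0$ pays nothing and a provider that sells no instance earns nothing (its valuation for an empty supply being zero), so both are trivially individually rational; it therefore suffices to treat a winning user $i$ (with $x_i=1$) and a provider $j$ that supplies at least one instance.

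\textbf{User side.} First I would fix a slot $s$ at which $i$ is allocated and show $p_{ijs}\le v_{is}$ from the trading price~(\ref{eqn:pricing}), $p_{ijs}=a_{ijs}\big(\kappa v_{is}+(1-\kappa)\frac{v_{is}}{V_s}v_{js}\big)$. Dividing by $v_{is}>0$ this reduces to $\kappa+(1-\kappa)\frac{v_{js}}{V_s}\le1$, which (as $\kappa\in[0,1]$) holds iff $v_{js}\le V_s=\sum_{i\in G^u}v_{is}$: the valuation of the provider serving the group at slot $s$ must not exceed the group's aggregate per-slot user valuation. This is exactly the condition imposed by the rule $j^*\leftarrow\arg\min\{v_j\mid v_j\le\cdots,\ \vec{D_s}\le\vec{s_{js}},\ \ldots\}$ in Algorithm~\ref{algInstanceAllocation} whenever it admits a provider to a (coalition of) user(s). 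Granting this, $c_i=\sum_{s}\sum_{j}p_{ijs}\le\sum_{s:\,y_{is}=1}v_{is}=\ell_i\cdot\frac{v_i}{\ell_i}=v_i$, where the number of allocated slots is $\ell_i$ by the constraint $x_i\ell_i=\sum_{s}y_{is}$; hence no winning user is charged above its bid.

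\textbf{Provider side.} Symmetrically, I would start from~(\ref{eqn:price}): the total price collected from a group $G$ at slot $s$ is $p_{Gs}=\kappa V_s+(1-\kappa)v_{Gs}$ with $v_{Gs}=\sum_{j\in G^p}v_{js}$ the aggregate provider valuation, and the same admission inequality gives $v_{Gs}\le V_s$, so $p_{Gs}\ge\kappa v_{Gs}+(1-\kappa)v_{Gs}=v_{Gs}$. Splitting $p_{Gs}$ among the providers of $G$ in proportion to their valuations $v_{js}$ (mirroring the user split in~(\ref{eqn:tradingprice})), each $j\in G^p$ collects at least $v_{js}$ at slot $s$; summing over the slots on which $j$ actually supplies and using~(\ref{eqn:profit}) gives $r_j\ge v_j$, i.e.\ no provider is ever paid below the minimum price it quoted, equivalently $u_j(G,A_t^*)\ge0$.

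\textbf{Main obstacle.} The arithmetic above is routine; the real work is in certifying the single inequality $v_{Gs}\le V_s$ for the \emph{final} group structure $\Omega^F$ and \emph{every} slot on which each group is served. Algorithm~\ref{algInstanceAllocation} applies its admission test incrementally, one user and one time slot at a time, while the provider valuation $v_{js}=\sum_k D_{js}^k q_j^k[D_{js}^k]$ is itself demand-dependent, and $\Omega^F$ is only reached after the migrating, merging, and splitting moves of Algorithm~\ref{algGroupFormation}, each of which re-runs Algorithm~\ref{algInstanceAllocation} on the reshaped group. I would therefore prove an invariant: every group produced at any stage of Algorithms~\ref{algInstanceAllocation}--\ref{algGroupFormation} satisfies $v_{Gs}\le V_s$ on each of its allocated slots, by induction on the construction --- base case a single user-provider pair (the admission condition itself), inductive step a user insertion or one of the three group-formation moves, using additivity of $V_s$ over disjoint user sets, the non-increasing price curve, and the fact that any re-optimized allocation again passes the test of Algorithm~\ref{algInstanceAllocation}. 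With this invariant the two inequality chains close and individual rationality follows; the only other point needing care is the bookkeeping for a group served by several providers in the same slot.
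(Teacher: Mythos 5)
Your proposal is correct and follows essentially the same route as the paper's own proof: both rest on the admission condition $v_j \le v_i$ enforced in line~22 of Algorithm~\ref{algInstanceAllocation} together with the observation that the trading price in~(\ref{eqn:pricing}) is a $\kappa$-convex combination lying between the users' and the providers' valuations, so no user pays above $v_i$ and no provider receives below $v_j$. Your write-up is more careful than the paper's two-sentence argument --- in particular the invariant $v_{Gs}\le V_s$ that you identify as the main obstacle, which must survive the incremental admission test and the migrating/merging/splitting moves of Algorithm~\ref{algGroupFormation}, is precisely the step the paper asserts without justification --- but the underlying idea is the same.
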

\begin{proof}
This proposition is guaranteed by the line 22 in Algorithm~\ref{algInstanceAllocation} and the trading price computed from the expression in (\ref{eqn:pricing}).
Any user's valuation $v_i$ does not exceed any provider's valuation $v_j$ derived from its price curve and the allocation.
In addition, the trading price for instances is ranged between the $v_i$ and $v_j$ due to a parameter $\kappa \in [0,1]$.
\end{proof}

\begin{pro}
The instance allocation associated with the group structure obtained by the proposed group formation algorithm is budget-balanced.
\end{pro}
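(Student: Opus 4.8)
\noindent\emph{Proof proposal.} The plan is to prove budget balance by a direct double-counting argument over the trading prices $p_{ijs}$. Recall from Section~\ref{subsec:overview} that ``budget-balanced'' here means the central controller neither injects nor extracts money, i.e.\ the total revenue of the providers equals the total payment of the users; so the goal is the identity $\sum_{j\in J_t} r_j = \sum_{i\in I_t} c_i$. The key observation is that, once expanded, both sides are the \emph{same} finite sum of the quantities $p_{ijs}$, so there is nothing to optimize or bound --- only to reorganize the summation carefully.

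First I would substitute the closed forms already derived: from~(\ref{eqn:cost}), $\sum_{i\in I_t} c_i = \sum_{i\in I_t}\sum_{s}\sum_{j\in J_t} p_{ijs}$, and from~(\ref{eqn:profit}), $\sum_{j\in J_t} r_j = \sum_{j\in J_t}\sum_{i\in I_t}\sum_{s} p_{ijs}$. To make the inner ranges compatible I would use the convention --- already enforced by the factor $a_{ijs}$ appearing in~(\ref{eqn:pricing}) --- that $p_{ijs}=0$ whenever $s$ lies outside user $i$'s demand period or user $i$ is not allocated to provider $j$ at slot $s$. Then both triple sums run over the common finite index set $\{(i,j,s):\, i\in I_t,\ j\in J_t\}$, and since reordering a finite sum does not change its value, $\sum_i c_i = \sum_j r_j$, which is exactly budget balance.

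To make the mechanism's design transparent (and to cross-check the internal consistency of the pricing rule), I would also give a complementary, group-wise derivation over the final partition $\Pi^F=(G_0,G_1,\dots,G_M)$. Fixing a group $G_m$ with $m\ge 1$ and a slot $s$: summing~(\ref{eqn:tradingprice}) over the users $i\in G_m^u$ collapses the weights $v_{is}/V_s$ to $1$, so the users of $G_m$ pay exactly the group charge $p_{G_m s}$ of~(\ref{eqn:price}); and summing $p_{ijs}$ over the providers $j\in G_m^p$, using constraint~(6) (at most one provider per allocated user per slot, via $y_{is}\in\{0,1\}$), recovers the same total $p_{G_m s}$, so the providers of $G_m$ collectively receive exactly $p_{G_m s}$. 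The split of $p_{G_m s}$ into its $\kappa$ and $(1-\kappa)$ parts is irrelevant here, since both parts are charged to users and both flow to providers. Because the partition conditions in Section~\ref{subsec:groupstructure} make $\{G_m^u\}_{m\ge 0}$ and $\{G_m^p\}_{m\ge 1}$ cover $I_t$ and $J_t$ disjointly, and the provider-less group $G_0$ contributes zero on both sides, summing this per-group identity over $m$ and $s$ again yields $\sum_i c_i=\sum_{m\ge 1}\sum_s p_{G_m s}=\sum_j r_j$.

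Since the claim reduces to bookkeeping, I do not expect a genuine obstacle; the steps that need care are (i) tracking the indicators $a_{ijs}$ so that the user-side and provider-side summations range over exactly the same set of triples, (ii) making sure the normalizer $V_s=\sum_{i\in G_m^u} v_{is}$ is taken over precisely the users whose payments are pooled into $p_{G_m s}$, so that the per-group collapse is tight, and (iii) observing that users in $G_0$ neither pay nor receive anything. Once these are settled, budget balance follows, and together with the two preceding propositions the proposed group auction satisfies all three auction properties of Section~\ref{subsec:overview}.
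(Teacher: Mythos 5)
Your proof is correct and its core argument --- expanding $\sum_i c_i$ and $\sum_j r_j$ via (\ref{eqn:cost}) and (\ref{eqn:profit}) and observing they are the same triple sum of the $p_{ijs}$ reordered --- is exactly the paper's proof. The additional group-wise cross-check (collapsing the weights $v_{is}/V_s$ to recover $p_{Gs}$ per group) is a sound supplementary verification but not needed; the paper stops after the summation swap.
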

\begin{proof}
This proposition is guaranteed by our design of trading price described in Section~\ref{subsubsec:pricing}.
From Equations in (\ref{eqn:cost}) and (\ref{eqn:profit}), 
\begin{align}
\sum_{i \in I_t} c_i
 = \sum_{i \in I_t} \sum_{s=t_i^s+1}^{t_i^e} \sum_{j \in J_t} p_{ijs}
 = \sum_{j \in J_t} \sum_{i \in I_t} \sum_{s=t_i^s+1}^{t_i^e} p_{ijs}
 = \sum_{j \in J_t} r_j
,
\end{align}
which implies that the total payments by users is the same as the total profits of providers.
\end{proof}

\section{Evaluation}
\label{sec:evaluation}

This section numerically evaluates the convergence  (in Section \ref{subsec:eval_convergence}) and optimality  (in Section \ref{subsec:eval_optimality}) of the proposed group formation algorithm and presents the advantages of the proposed real-time group auction system by showing comparison results with different schemes through the simulation (in Section \ref{subsec:simulationresults}).
 

\subsection{Convergence}
\label{subsec:eval_convergence}

We verify the convergence of the proposed group formation algorithm from two perspectives.
The first is whether the algorithm always converges over iterations given any parameter setting (e.g., the number of users, their demands, valuation, etc.)
The second is how the randomly given initial group structure (Line 2 in Algorithm~\ref{algGroupFormation}) impacts the final group structure (Line 18 in Algorithm~\ref{algGroupFormation}).

Figure~\ref{fig:sim_convergence_config} shows how social welfare, i.e., the total utilities defined by Equation (\ref{eqn:optimization}), improves over iterations at given instance allocation in different configurations of bids and offers.
In the figure, x-axis is the iteration index of the group formation algorithm, and y-axis indicates the total utilities of users and providers, whose value is normalized to the value at Iteration 20.
The result is obtained with 8 users and 2 providers requesting for/providing for one instance type under following bids and offers generated based on parameter setting described in Table~\ref{tab:sim_parametersetting1}.
For the user's valuation, we simply set the maximum valuation-per-unit as 10 cents (i.e., \$0.10 for one instance-per-time-slot), and then we multiply the unit price by the number of instances and length that a user requests.
For example, for a user who requests instances of $d_i=5$ for $\ell_i=3$, the value of his valuation is uniformly generated between 0 and $10\cdot5\cdot3=150$ cents (=\$1.50).
For the provider's valuation, price curves are randomly generated with $q_j[1] \in [5, 10]$ 
followed by $q_j[1] = \cdots = q_j[n-1] > q_j[n] = \cdots = q_j[s_j]$ where $n \in [2,s_j]$.
The line plots in the Figure~\ref{fig:sim_convergence_config} are the results of 10 randomly generated configurations of bids and offers.
The initial group structure is set to $\Pi=(G_0=\{1,\ldots,8\}$, $G_1=\{\emptyset, \{1\}\}$, $G_2=\{\emptyset, \{2\}\})$ and $L=\{(i,0)|i=1,\ldots,8\}$ for all cases.
The values are normalized to the values at Iteration 20.

In Figure~\ref{fig:sim_convergence_config}, we observed three trends in the value improvement.
For example, the value increases more at early iterations (Trend 1), the value increases more at later iterations (Trend 2), and the value gradually increases over some iterations (Trend 3). 
We observe that for all of 10 different configurations, Algorithm~\ref{algGroupFormation} converges to some values after several iterations.
The optimality will be discussed in the next subsection.

Figure~\ref{fig:sim_convergence_init} shows how the initial group structure impacts the final group structure.
For this measurement, we generate one configuration as described in Table~\ref{tab:sim_convergence1_example}.
We run Algorithm~\ref{algGroupFormation} ten times with 10 different randomly generated initial group structures.
The solid-line plots indicate the social welfare improvement over iterations.
A dashed line indicates the sum of the values $\epsilon^*$ (i.e., a relaxing parameter) of participants, which corresponds to one case denoted by a circle-pointed line.
It is seen that, during the process of group formation, the sum of $\epsilon^*$ gradually decreases and eventually reaches to 0.
This implies that the obtained group structure is Nash stable (i.e., $\epsilon^*$-Nash stable with $\epsilon^*$=0) in this case.
We verify that the algorithm finds one group structure whose social welfare is 118 regardless of initial group structures. 

\begin{table}[t]
  \caption{Parameter setting for bids and offers. This is used in Figures~\ref{fig:sim_convergence_config} and \ref{fig:sim_optimality}}
  \label{tab:sim_parametersetting1}
\centering
\small
	\begin{tabular}{|l|c||l|c|}
	\hline
	\multicolumn{2}{|c||}{User's bid} &
	\multicolumn{2}{|c|}{Provider's offer}
  \\ \hline
	Demand: \# of instances  &  $d_i \in [0, 10]$ 	&
	Supply: \# of instances  &  $s_j \in [10, 30]$
	\\
	Demand Period: Length & $\ell_i \in [1, 6]$ 	&
	Supply Period: Length & $w_j \in [1, 6]$
	\\
	\hspace{+24mm} Starting time  &  $t_i^s \in [1, 3]$ 	&
	\hspace{+22mm} Starting time  &  $t_j^s \in [1, 3]$
	\\
	\hspace{+24mm} Ending time & $t_i^e \in [t_i^s$+$\ell_i, \ t_i^s$+$\ell_i$+6] 	&
	\hspace{+22mm} Ending time & $t_j^e = t_j^s$+$w_j$
	\\
	Valuation & $v_i \in [0,V]$ 	&
	Valuation & $q_j[1] \in [5,10]$
	\\
													 & $V$=$10 \cdot d_i \cdot \ell_i$ 	&
													 & $q_j[n]\in [1,q_j[1]$-$1]$
	\\ \hline
	\end{tabular} 
\end{table}
\begin{table}[t]
  \caption{An example bids/offers configuration for 8 users and 2 providers, generated under parameter setting in Table I. This is used in Figures \ref{fig:sim_convergence_init} and \ref{fig:sim_convergence1_examplestructure}.}
  \label{tab:sim_convergence1_example}
\centering
\footnotesize
	\begin{tabular}{|l|c|c|c|c|c|c|c|c||l|c|c|}
	\hline
	\multicolumn{9}{|c||}{Bids} &
	\multicolumn{3}{|c|}{Offers}
  \\ \hline
  Demand  & $u_1$ & $u_2$ & $u_3$ & $u_4$ & $u_5$ & $u_6$ & $u_7$ & $u_8$ &
  Supply  & $p_1$ & $p_2$
  \\ \hline
	\# of instances & 2 &	2 &	2 &	5 &	5 &	5 &	10 & 10 &	
	\# of instances &	20 & 20	
	\\ \hline
	Period: Length & 4 & 5 & 6 & 4 & 5 & 6 & 4 & 6 &	
	Period: Length & 8 & 8
	\\ 
	\hspace{+9mm} Starting time  & 1 & 1 & 1 & 2 &	2 & 2 & 1 & 1 &	
	\hspace{+9mm} Starting time  & 1 & 1 
	\\
	\hspace{+9mm} Ending time  & 6 & 6 &	6 &	8 &	8 &	8 &	8 & 8 &	
	\hspace{+9mm} Ending time  & 8 & 8
	\\ \hline
	Valuation  & 8 & 10 & \bf 20 & \bf 10 & 20 & 30 & 40 & 60 &
	Valuation  & $q_1[1]$=.50 & $q_2[1]$=.60
	\\
	(per-unit valuation) & (1) & (1) & (1.7) & (0.5) & (0.8) & (1) & (1) & (1) &
	  & $q_1[15]$=.40 & $q_2[15]$=.30
	\\ \hline
	\end{tabular} 
\end{table}

\begin{figure}[t]
\centering
 \subfigure[\small For 10 different bids/offers configurations]{
   \includegraphics[scale = 0.63] {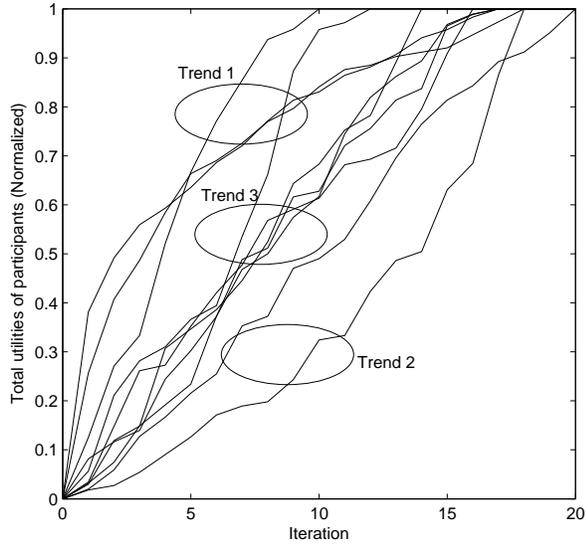}
  \label{fig:sim_convergence_config}
 }
 \subfigure[\small For 10 different initial group structures]{
   \includegraphics[scale = 0.63] {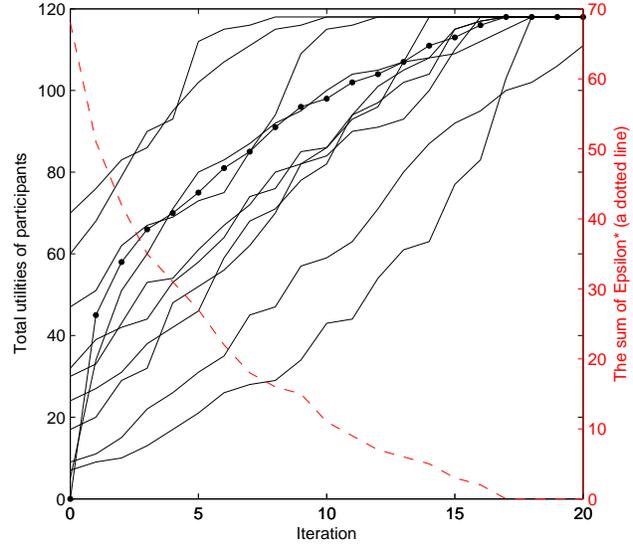}
  \label{fig:sim_convergence_init}
 }
 \subfigure[\small The obtained instance allocation]{
   \includegraphics[scale = 0.75] {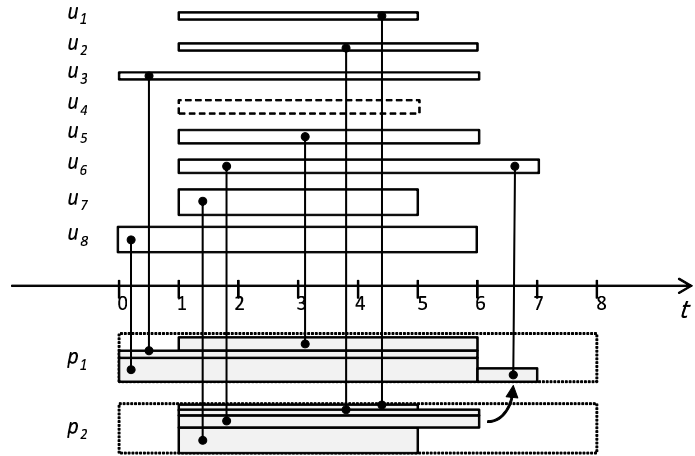}
  \label{fig:sim_convergence1_examplestructure}
 }
\caption{Impact of bids/offers configurations and initial group structures on the social welfare improvement by the proposed group formation algorithm.}
\end{figure}


\subsubsection{An Example Allocation after Convergence}

Figure~\ref{fig:sim_convergence1_examplestructure} shows the group structure and instance allocation obtained at Iteration 20 with one initial group structure $\Pi = (G_0=\{1,2,3,4,5,6,7,8\}$, $G_1=\{\emptyset, \{1\}\}$, $G_2=\{\emptyset, \{2\}\} )$ and $L=\{(i,0)|i=1,\ldots,8\}$, denoted by a circle-pointed line in Figure~\ref{fig:sim_convergence_init}.
The final group structure becomes $\Pi = ( G_0=\{4\}$, $G_1=\{\{1,2,3,5,6,7,8\}, \{1,2\}\})$, and $L=\{(1,2),(2,2),(3,1),(4,0),$ $(5,1),(6,1),(7,2),(8,1)\}$.
There are some important observations in this result as follows:
\begin{itemize}
\item
In time slot 1, Users 3 and 8 are assigned to Provider 1 because Provider 1 offers lower price (\$.50) than Provider 2 (\$.60).
Users 1 and 2 are not assigned in time slot 1 because they can get more discount by being allocated after time slot 2 with other users. Users 1 and 2 cannot obtain the discount in time slot 1 with Users 3 and 8 because the total number of demands is 14, which does not affect price curves of providers.
\item
User 4 cannot be allocated by any providers. There are no sufficient supplies (e.g., 17 instances for Provider 1 and 19 instances for Provider 2) between time slots 2 and 5. That is, User 4 loses the auction due to relatively lower valuation (\$0.5 per-unit) than the others. 
\item
In time slots between 2 and 5, Provider 2 accepts Users 1, 2, 6, and 7 whose valuations are relatively higher than other users.
The valuation of Provider 2 becomes lower than that of Provider 1 when the number of total supplies exceeds 15, which results in the increase of social welfare.
User 3 with the highest valuation among all users remains to be assigned to Provider 1 because of the migration cost.
\item
User 8 is assigned in time slots between 1 and 6 (but not between 2 and 7). Given higher valuation of User 3 (\$1.5 per-unit) than that of User 6, the sum of utilities of Users 3 and 8 at time slot 1 is larger than that of Users 6 and 8 at time slot 7.
\item
Instances for User 6 are migrated from Provider 2 to 1 in time slot 7 due to the lower valuation of Provider 1.
\end{itemize}



\subsection{Optimality}
\label{subsec:eval_optimality}

We also verify that the proposed instance allocation, i.e., the outcome of Algorithm 1, is nearly optimal compared with an enumeration method, in which all possible allocation combinations are evaluated.
In addition, we compare the allocation in a different scheme called IA-FCFS, which stands for an Individual Auction by the First-Come-First-Serve allocation. In this scheme, users individually join an auction in a first-come-first-serve manner, and their demands are allocated to providers in the order of valuation from the lowest among providers with sufficient supplies.
We call GA-BCT (i.e., Group Auction is executed at the Bidding Closing Time) for our proposed instance allocation given by Algorithm 1.

Figure~\ref{fig:sim_optimality} shows the statistics (e.g., maximum, minimum, and quarter percentiles) of the social welfare (i.e., the total utilities of participants) over 100 randomly generated configurations of bids and offers based on the parameter setting described in Table~\ref{tab:sim_parametersetting1}.
The initial group structure is randomly generated.
The maximum number of iterations is set to 20.
The number in the box indicates the sum of the differences of the values from the ones obtained by the enumeration.

In Figure~\ref{fig:sim_optimality}(a), the statistics for GA-BCT are better than those for IA-FCFS. 
The main reason is that instances are efficiently allocated to users so that they get price discounts, which leads to the increasing social welfare.
It also shows that the maximum value for GA-BCT is the same as that from the enumeration although the percentile values are lower because the proposed allocation cannot find the optimal solution in some configurations.
We also investigate the impact of the numbers of users and providers on the results in the Figures~\ref{fig:sim_optimality}(b)-(d).
As the number of participants increases, the difference of values in GA-BCT and Brute-force increases.
GA-BCT might not find the optimal value; however, the statistics for GA-BCT are still nearly the same as the optimal and definitely higher than those for IA-FCFS.

\begin{figure}[t]
  \centering
	\resizebox{1.0\linewidth}{!}{\includegraphics{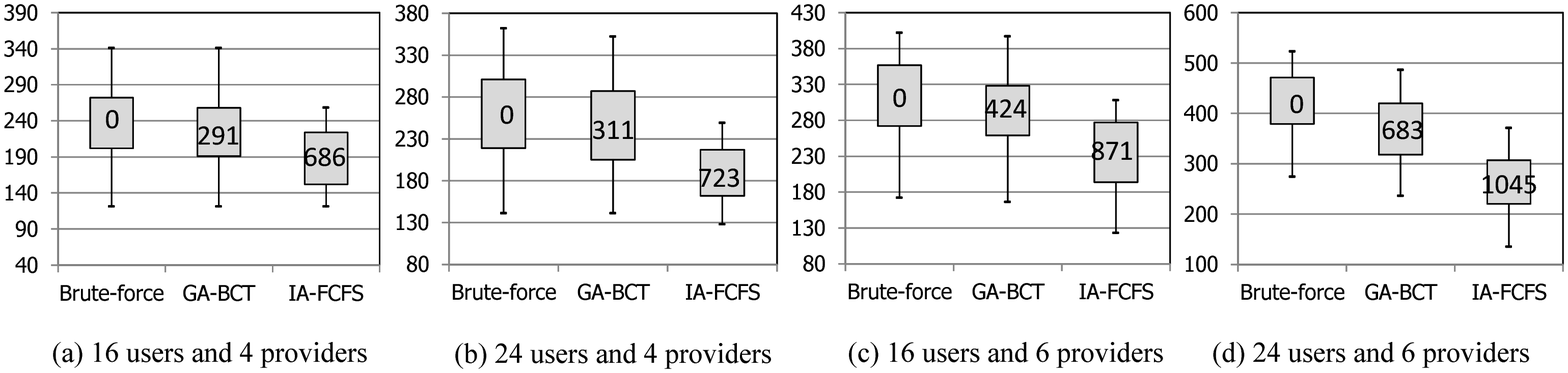}}
  \caption{The statistics of the social welfare over 100 randomly generated configurations. Y-axis indicates the social welfare given allocation, and the number in the box is the sum of the difference of the value from the one obtained by Brute-force.}
  \label{fig:sim_optimality}
\end{figure}

\subsection{Simulation Results}
\label{subsec:simulationresults}

This section presents a real-time simulation of the proposed group auction system and discusses its advantages in terms of the number of auction winners, resource utilization, and total profit of providers by showing comparison results with different schemes.

\subsubsection{Parameter setting for the simulation}

The simulation code is written in Matlab and run on a local machine (Intel Xeon 3.6GHz, 4G memory).
The input data is generated based on parameter settings described in Table~\ref{tab:sim_config}.
The obtained results vary based on the generated input data.
Because we do not have certain statistics (e.g., volume, period, requirement, etc) about user requests to the real cloud markets (e.g., SpotCloud, AmazonEC2, etc), we generate the input data with a uniform distribution.

We run the simulation for $SimT$=72 time slots period (e.g., 72 hours).
During the period, we fix 2 providers each of whom supplies 50 instances for each of 3 different types (i.e., K=3).
We assume that $n^b_{\tau}$, the number of new arrival users within time slot $\tau$, is randomly selected between 0 and 3.
Each user's bid is randomly generated.
Similar to the parameter setting described in the previous section, we set the maximum valuation-per-unit for instance type $k$ as 10$k$ cents and multiply the unit price by the number of instances and length that a user requests.
For example, for a user who requests instances of $\vec{d_i}=(8,5,3)$ for $\ell_i=3$, the value of his valuation is uniformly generated between 0 and $3\cdot(10\cdot8+20\cdot5+30\cdot3)=810$ cents (=\$8.10).
For the provider's valuation, price curves are randomly generated with $q_j^k[1] \in [5k, 10k]$ 
followed by $q_j^k[1] = \cdots = q_j^k[n-1] > q_j^k[n] = \cdots = q_j^k[s_j^k]$ where $n \in [2,s_j^k]$.
For the provider's valuation, we set price curves as shown in Figure~\ref{fig:example_pricecurve}.
For example, Provider 1's price curve for type $k=1$ is $q_1^1[1]=\$0.05$ and $q_1^1[31]=\$0.03$ while Provider 2's price curve is $q_2^1[1]=\$0.06$ and $q_2^1[16]=\$0.04$.


\begin{table}[t]
  \caption{Parameter Setting for the Simulation}
  \label{tab:sim_config}
\centering
\small
	\begin{tabular}{|l|c||l|c|}
	\hline
	Simulation time period & $SimT=72$ &
	\# of instance types	& $K=3$
  \\ \hline
	\# of arrival users within time slot $\tau$ & $n_{\tau}^b \in [0, 3]$ 	&
	\# of providers & $n^o=2$
	\\ \hline
	\multicolumn{2}{|c||}{User's bid $b_i$} &
	\multicolumn{2}{|c|}{Provider's offer $o_j$}
  \\ \hline
	Demand: \# of instances of type $k$  &  $d_i^k \in [0, 10]$ 	&
	Supply: \# of instances of type $k$  &  $s_j^k = 20$
	\\
	Demand Period: Length & $\ell_i \in [1, 6]$ 	&
	Supply Period: Length & $w_j = SimT$
	\\
	\hspace{+24mm} Starting time  &  $t_i^s \in$ [$\tau$+1, $\tau$+3] 	&
	\hspace{+22mm} Starting time  &  $t_j^s=1$
	\\
	\hspace{+24mm} Ending time & $t_i^e \in [t_i^s$+$\ell_i, \ t_i^s$+$\ell_i$+6] 	&
	\hspace{+22mm} Ending time & $t_j^e = SimT$
	\\
	Valuation & $v_i \sim $Uniform$(0,V)$ 	&
	Valuation & $Q_j = [\vec{q_j}^k]$
	\\
													 & $V$=$\ell_i \sum_{k \in K} 10k \cdot d_i^k$ 	&
													 & (Fig. \ref{fig:example_pricecurve})
	\\ \hline
	\end{tabular} 
\end{table}

\begin{figure}[t]
  \centering
	\resizebox{0.8\linewidth}{!}{\includegraphics{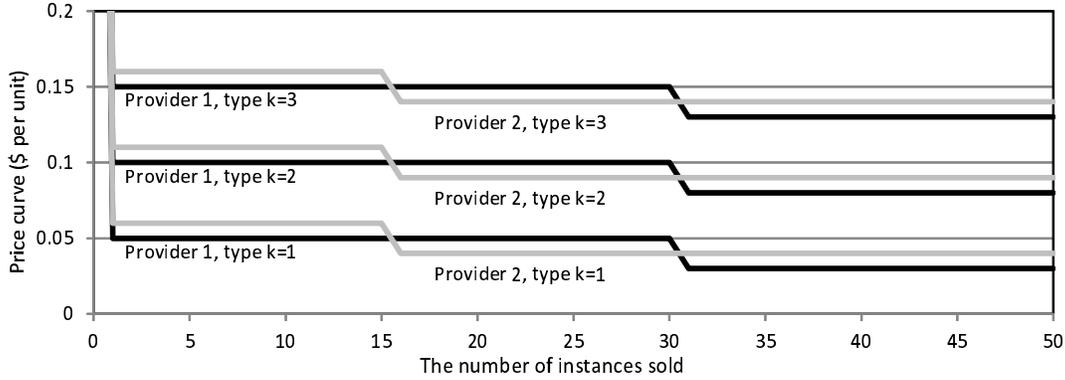}}
  \caption{Example price curves for Provider 1 and 2 for instance type of $k$=1, 2, and 3.}
  \label{fig:example_pricecurve}
\end{figure}

\subsubsection{Comparison}

In this section, we compare the proposed group auction scheme, called GA-BCT-GF (i.e., Group Auction is executed at the Bidding Closing Time in each of groups determined by Group Formation algorithm), with different schemes, i.e., IA-FCFS and GA-BCT.
Differences among the schemes are summarized in Table~\ref{tab:sim_scheme}.
The main differences are usages of Algorithms~\ref{algInstanceAllocation} and \ref{algGroupFormation}.
In IA-FCFS, both algorithms are not used; instead, users individually join an auction in a first-come-first-serve manner, and their demands are allocated to providers in the order of valuation from the lowest among providers with sufficient supplies.
In GA-BCT, whenever new participants arrive to the system, the instance allocation is computed by Algorithm~\ref{algInstanceAllocation} for all participants and executed at the bidding closing time.
In GA-BCT-GF, Algorithm~\ref{algGroupFormation} is used to construct a group formation. The instance allocation is computed by Algorithm~\ref{algInstanceAllocation} and executed at the bidding closing time for each of the groups.

\begin{table}[t]
  \caption{Description of different schemes}
  \label{tab:sim_scheme}
\centering
\small
	\begin{tabular}{|m{0.5in}|m{3in}|c|c|m{0.75in}|}
	\hline
	          &             & Dynamic       & Group         &             \\
	          & Description & Allocation    & Formation     & Cooperation \\
	          &             & (Algorithm 1) & (Algorithm 2) &             \\
	\hline \hline
	IA-FCFS   & Users individually join an auction in a first-come-first-serve manner, and their demands are allocated to providers in the order of valuation from the lowest among providers with sufficient supplies.
						& No & No & No participants cooperate. \\
	\hline
	GA-BCT    & All users and providers join a group auction where the instance allocation is computed by Algorithm 1 and executed at the bidding closing time.
						& Yes & No & All participants cooperate. \\
	\hline
	GA-BCT-GF & All users and providers are divided into multiple groups by Algorithm 2. The members in each of the groups join a group auction where the instance allocation is computed by Algorithm 1 and executed at the bidding closing time.
						& Yes & Yes & Participants cooperate in their own groups. \\
	\hline
	\end{tabular} 
\end{table}

Figure~\ref{fig:simresults1} shows the number of auction winners and losers during the simulation period in different schemes.
Table~\ref{tab:sim_comparison} summarizes the comparison results in terms of the total number of winners, resource utilization, profit of providers, and average payment of users.
From the number of winners, the request acceptance rate of users, which is the number of winners divided by the total number of users, will be obtained.
The resource utilization is given by the total allocated (i.e., provided) instances divided by the total supplies of providers.
The last column in the table indicates the improvement of the results by GA-BCT-GF compared to IA-FCFS.


The impact of the group bidding becomes stronger with a dynamic bid closing time.
In GA-BCT-GF, a central controller retains more bids before submitting a group bid. That is, the total number of requested instances becomes larger than that of IA-FCFS.
This results in providers to lower their offering prices, so more users have a chance to win the auction.
For example, during a time period between $t$=7 and 12 (indicated by a dotted box), 10 users in total join auctions.
Among the 10 users, 9 users win the auction in GA-BCT-GF while only 4 users win in IA-FCFS.
Similar results can be observed during a time period between $t$=31 and 39 (6 users win in GA-BCT-GF while 3 users win in IA-FCFS).
As shown in Table~\ref{tab:sim_comparison}, the total number of winners of GA-BCT-GF ($\gabctgfWin$ (81.1\%)) is larger than that of IA-FCFS ($\nbiafWin$ (71.6\%)).
GA-BCT-GF achieves \upReqAR of the improvement in the request acceptance rate.

The proposed group formation also impacts on the total number of winners.
It can be observed that users in GA-BCT-GF have more chance to win an auction before their deadline than those in GA-BCT. Since the GA-BCT considers one group of all cooperating participants, once the allocation is determined, it is executed at the bidding closing time of the group.
For example, at time $t$=9, 2 users lose the auction in GA-BCT (indicated by an arrow) while, in GA-BCT-GF, the group including the users did not execute the allocation immediately because the bidding closing time has not met yet, but at $t$=12.
Similar results can be observed at time $t$=17, 23, 30, and 40.
The total number of winners of GA-BCT-GF ($\gabctgfWin$) is slightly larger than that of GA-BCT ($\gabctWin$) as shown in Table~\ref{tab:sim_comparison}. 

In terms of the total profit, considering a group auction, Provider 2's profit increases from $\nbiafProvtwo$ (IA-FCFS) to $\gaProvtwo$ (GA-BCT) and $\gagfProvtwo$ (GA-BCT-GF) while Provider 1's profit decreases from $\nbiafProvone$ (IA-FCFS) to $\gaProvone$ (GA-BCT) and $\gagfProvone$ (GA-BCT-GF).
The reason is that when the number of requested instances is large (e.g., between 15 and 30 in Figure~\ref{fig:example_pricecurve}), Provider 2 offers cheaper prices than that of Provider 1. Consequently, the total number of provided instances by Provider 1 decreases, and hence, the profit slightly decreases.
However, if more users participates the auction, then Provider 1's profit increases again due to Provider 2's supply limitation.

During the simulation, 8640 instances (= 2 providers$\times$(20 instances$\times$3 types)$\times$72 SimTime) are supplied by providers, and 6090 instances ($\vec D$ = (\iaDone, \iaDtwo, \iaDthree)) for IA-FCFS, 6405 instances ($\vec D$ = \gaDone, \gaDtwo, \gaDthree)) for GA-BCT, and 7668 instances ($\vec D$ = \gagfDone, \gagfDtwo, \gagfDthree)) for GA-BCT-GF are allocated, respectively.
Due to the optimal instance allocation computation in GA-BCT and GA-BCT-GF, the resource utilization improves from $\nbiafResutil$ (=6090/8640) to $\gaResutil$ (=7155/8640) and $\gagfResutil$ (=7668/8640), which shows \upResUtil of the improvement.
It follows that the total profits of all providers increases from $\nbiafProvSum$ for IA-FCFS to $\gagfProvSum$ for GA-BCT-GF (i.e., $\upProfit$ of the improvement), which also gives the \upPay of decrements in the average payment (i.e., the average cost) for users from $\iaAvgCost$ for IA-FCFS to $\gaAvgCost$ for GA-BCT-GF.

\begin{figure}[t]
   \centering
   \subfigure{
      \includegraphics[scale=0.93]{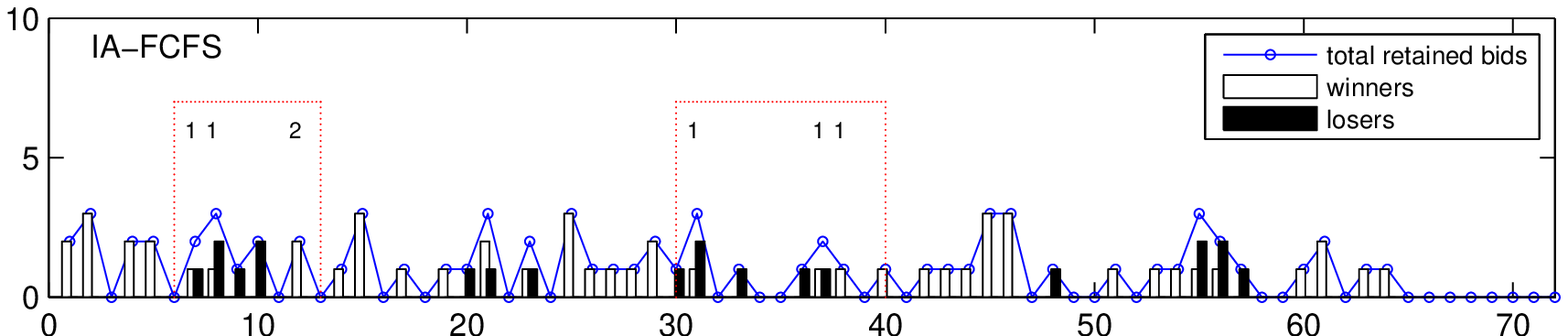}      
   }
   \subfigure{
      \includegraphics[scale=0.93]{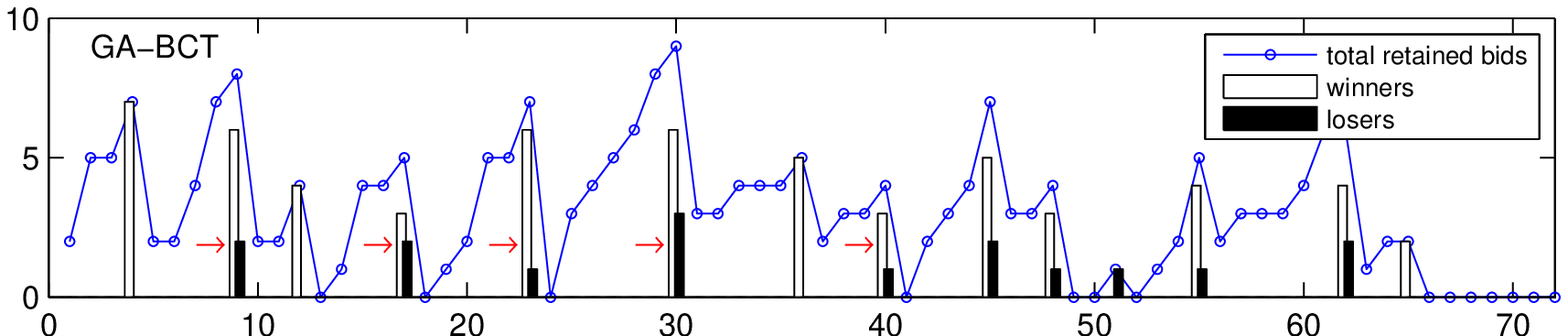}      
   }
   \subfigure{
      \includegraphics[scale=0.93]{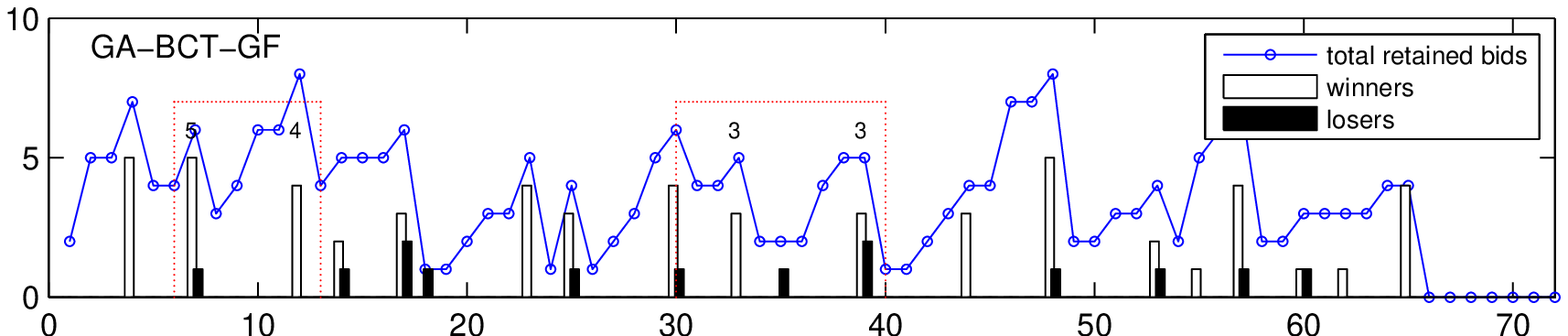}      
   }
	 \caption{The number of auction winners and losers.}
	 \label{fig:simresults1}
\end{figure}

\begin{table}[t]
  \caption{Summary of simulation results.}
  \label{tab:sim_comparison}
  \centering
	\begin{tabular}{|c|c||c|c|c|c|}
	\hline
	\multicolumn{2}{|c||}{} & \multicolumn{3}{|c|}{Scheme} & \multirow{2}{*}{Improvement}
	\\ \cline{3-5}
	\multicolumn{2}{|c||}{} & IA-FCFS & GA-BCT & GA-BCT-GF & 
	\\ \hline \hline
	\multicolumn{2}{|c||}{\#Winners (\#Losers)}
			& \nbiafWin (\nbiafLose) &   \gabctWin (\gabctLose)  & \gabctgfWin (\gabctgfLose) & \upReqAR $\uparrow$
	\\ \hline
  \multicolumn{2}{|c||}{Total \# of allocated}
  		& 6090 = (\iaDone,    &  7155 = (\gaDone,   & 7688 = (\gagfDone,    & \multirow{2}{*}{-}
  \\
  \multicolumn{2}{|c||}{instances ($D^1, D^2, D^3$)} 
  		& \iaDtwo, \iaDthree) &  \gaDtwo, \gaDthree)  & \gagfDtwo, \gagfDthree) &
  \\ \hline
  \multicolumn{2}{|c||}{Resource utilization}
  		& \nbiafResutil & \gaResutil  & \gagfResutil & \upResUtil $\uparrow$
  \\ \hline
  \multirow{3}{*}{\shortstack{Total \\ profit}}
      &  Provider 1 & \nbiafProvone & \gaProvone  & \gagfProvone &  \\
      &  Provider 2 & \nbiafProvtwo & \gaProvtwo  & \gagfProvtwo &  \\
      & [Sum] & [\nbiafProvSum] &   [\gaProvSum]  & [\gagfProvSum]  & \upProfit $\uparrow$
  \\ \hline
  \multicolumn{2}{|c||}{Avg. payment per user} 
  		& \iaAvgCost & \gaAvgCost   & \gagfAvgCost & \upPay $\downarrow$
  \\ \hline
	\end{tabular} 
\end{table}

Finally, we evaluate the system in the average case analysis.\footnote{In our simulation, different input data were generated in different simulation runs. So, it is hard to show the confidence interval of the results. If we run the simulation with the same set of bids and offers, then our system results in the almost same instance allocation. It means that the confidence interval is approximately 100\% but there is no meaning to us. This is one of the reasons to show the average case analysis rather than the confidence interval.}
Table~\ref{tab:sim_comparison_statistics} shows the statistics (e.g., average and standard deviation) of the results over 100 simulation runs based on the same parameter setting (Table 3).
The standard deviation is slightly high because the total number of users arrived during the simulation time period ($SimT=72$) is different in different simulation runs. 
However, GA-BCT-GF still outperforms the other schemes in this average case analysis.
GA-BCT and GA-BCT-GF accept more users (i.e., more winners) than IA-FCFS, which leads to increase the profit of providers.
Provider 2 gains more profit than that of Provider 1 when considering the group discount.
GA-BCT-GF produces more efficient instance allocation by considering a group formation than that of GA-BCT, which leads to improve the resource efficiency in terms of its utilization.

\begin{table}[h]
  \caption{Statistics (Avg $\pm$ Stddev) of the results over 100 simulation runs.}
  \label{tab:sim_comparison_statistics}
  \centering
	\begin{tabular}{|c|c||c|c|c|}
	\hline
	\multicolumn{2}{|c||}{} & \multicolumn{3}{|c|}{Scheme}
	\\ \cline{3-5}
	\multicolumn{2}{|c||}{} & IA-FCFS & GA-BCT & GA-BCT-GF
	\\ \hline \hline
	\multicolumn{2}{|c||}{Avg of \#Winners}
			& \CInbiafWin &   \CIgabctWin & \CIgabctgfWin
	\\ \hline
  \multicolumn{2}{|c||}{Avg of Resource utilization}
  		& \CInbiafResutil & \CIgaResutil  & \CIgagfResutil 
  \\ \hline
  \multirow{3}{*}{\shortstack{Avg of \\ Total \\ profit}}
      &  Provider 1 & \CInbiafProvone & \CIgaProvone  & \CIgagfProvone   \\
      &  Provider 2 & \CInbiafProvtwo & \CIgaProvtwo  & \CIgagfProvtwo   \\
      & [Sum] & [\CInbiafProvSum] &   [\CIgaProvSum]  & [\CIgagfProvSum]  
  \\ \hline
	\end{tabular} 
\end{table}

In summary, 
the proposed real-time group auction system with cooperation outperforms two different schemes, mimicking existing cloud hosting services, in terms of  the resource efficiency (e.g., \upReqAR higher request acceptance rate for users, obtained from the number of winners, and \upResUtil higher resource utilization for providers) and monetary benefits (e.g., \upPay lower average payments for users and \upProfit higher total profits for providers) with the considered simulation setting.
We have shown these advantages in the average case analysis.

\section{Conclusion}
\label{sec:conclusion}

This paper has proposed a real-time group auction system in the cloud instance market, and the simulation studies have verified its applicability and effectiveness in terms of resource efficiency and monetary benefits to auction participants.
We have shown that the proposed system outperforms two different schemes, mimicking existing cloud hosting services, in terms of resource efficiency and monetary benefits.
The proposed group formation and instance allocation algorithms have been analyzed in their complexity, stability, and optimality. The obtained results have shown that the algorithms with quadratic time complexity find nearly optimal group structures regardless of configurations and initial conditions.
%

As our future work, we have left a few issues that have not solved yet and interesting research directions as follows.
First, when to execute the proposed algorithms and when to close the auction are important problems in real-time auction because the computation of instance allocation and group formation have time overhead. In the current system, the bid closing time may not be the best. Finding the {\it best bid closing time} is challenging.
Second, we will improve the {\it scalability} of our algorithms to accommodate more participants in a reasonable time. In this paper, we have mainly focused on the design of the system model and algorithms, and we have shown the theoretical justification instead of evaluating a large number of participants.
Third, we will investigate various {\it bidding strategies} of participants and also different {\it price determination schemes} to know how they affect the monetary benefits to the participants. Those are pre-determined in the current system.
%
%
Finally, we will also consider our approach in various and complex situations.
For example, when a user requests multiple applications that have dependency, it might be interesting to investigate the best resource provisioning, i.e., where (or which machines) to locate which application, in terms of the performance such as response time and throughput.
It might be also interesting to consider the case that two users request the same application operated by the same provider, and the two users share the resources for the application.

%
%
%

\ifCLASSOPTIONcaptionsoff
  \newpage
\fi


\bibliographystyle{IEEEtran}
\bibliography{reference}


 

\begin{table}[h]
\begin{tabular}{cp{34.5em}}
\raisebox{-6.5em}{\includegraphics[width=1in,height=1.2in,clip,keepaspectratio]{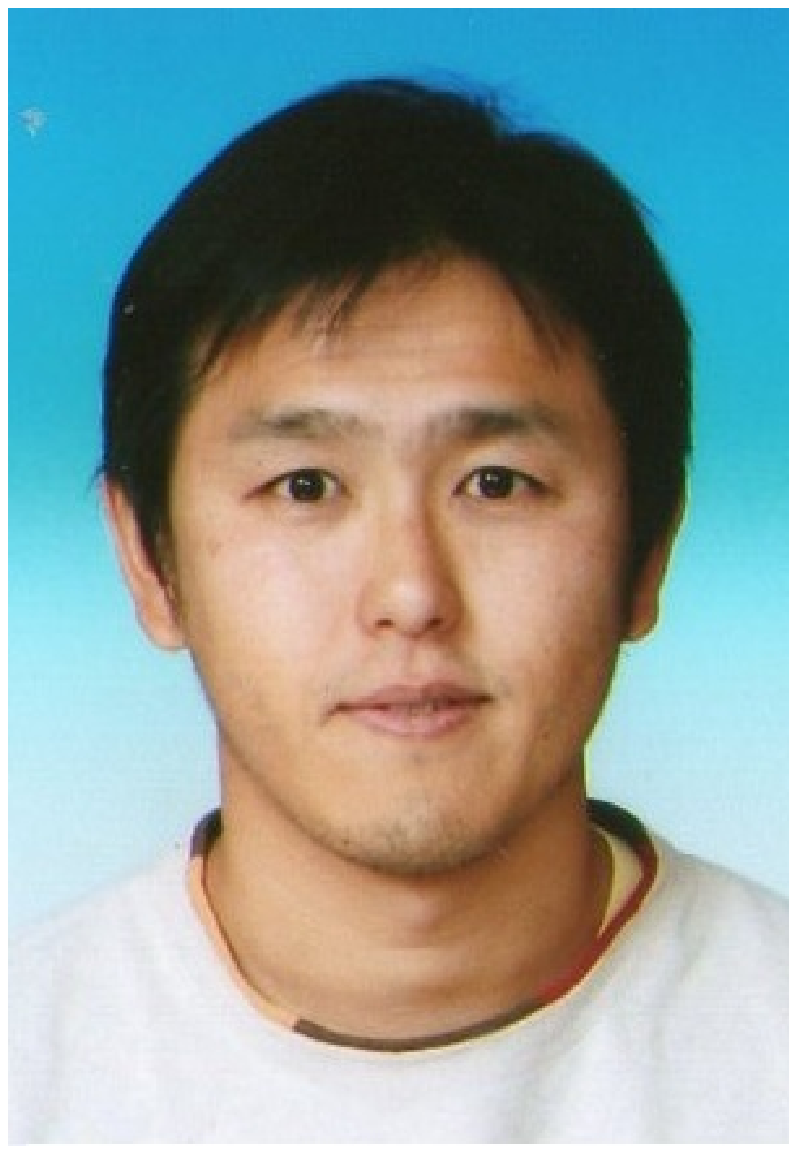}} &
\small
Chonho Lee currently works as a research fellow at the School of Computer Engineering, Nanyang Technological University, Singapore. He received his BS and MS in Computer Science from University of California, Irvine and Ph.D in Computer Science from University of Massachusetts, Boston. His current research interests include optimization and self-adaptation using game theory and bio-inspired approaches in large-scale network systems such as data centers and clouds.
\\ & \\
\raisebox{-7em}{\includegraphics[width=1in,height=1.2in,clip,keepaspectratio]{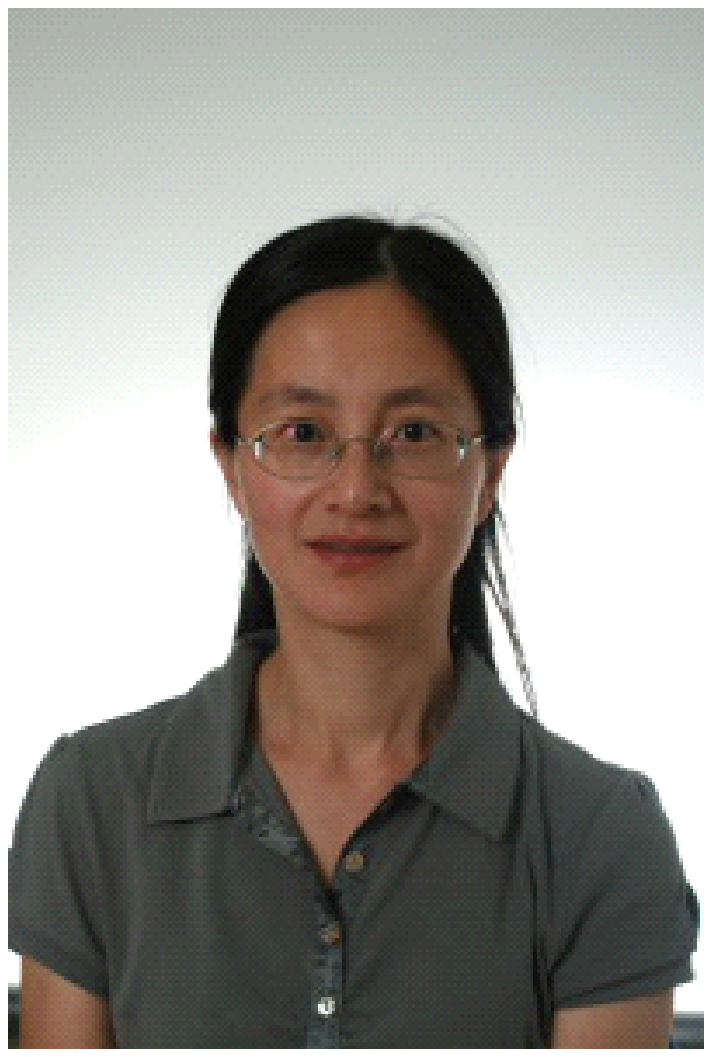}} &
\small
Ping Wang (M'08) received the PhD degree in electrical engineering from University of Waterloo, Canada, in 2008. Since June 2008, she has been an assistant professor in the School of Computer Engineering, Nanyang Technological University, Singapore. Her current research interests include resource allocation in multimedia wireless networks, cloud computing, and smart grid. She was a corecipient of the Best Paper Award from IEEE Wireless Communications and Networking Conference (WCNC) 2012 and IEEE International Conference on Communications (ICC) 2007. She is an Editor of IEEE Transactions on Wireless Communications, EURASIP Journal on Wireless Communications and Networking, and International Journal of Ultra Wideband Communications and Systems.
\\ & \\
\raisebox{-6.5em}{\includegraphics[width=1in,height=1.2in,clip,keepaspectratio]{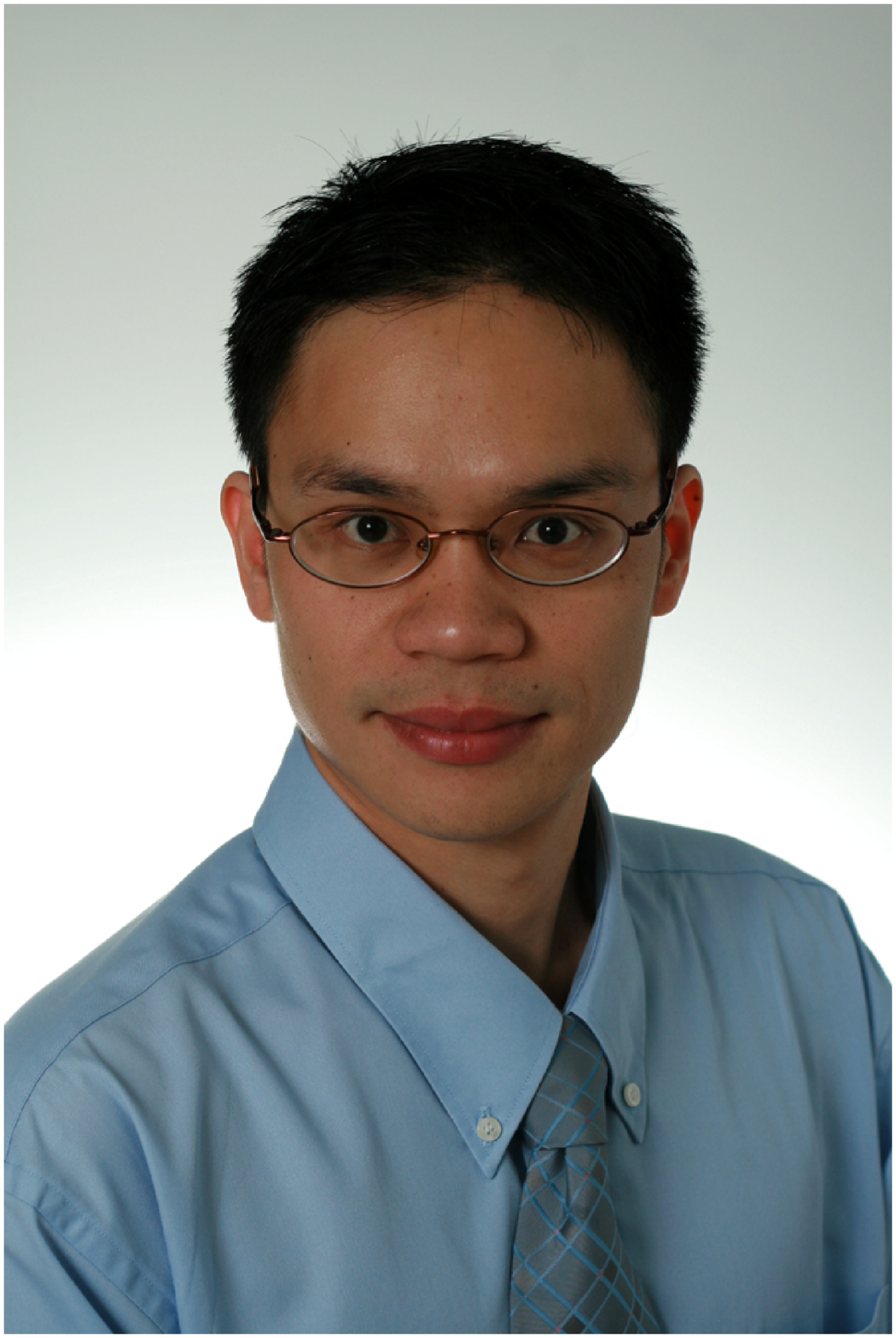}} &
\small
Dusit Niyato is currently an Assistant Professor in the School of Computer Engineering, at the Nanyang Technological University, Singapore. He obtained his Bachelor of Engineering in Computer Engineering from King Mongkut's Institute of Technology Ladkrabang (KMITL), Bangkok, Thailand. He received his Ph.D. in Electrical and Computer Engineering from the University of Manitoba, Canada.
\end{tabular}
\end{table}








\end{document}